\documentclass[11pt]{article}
\usepackage{fullpage}
\usepackage[T1]{fontenc}
\usepackage[utf8]{inputenc} 
\usepackage{xcolor}

\usepackage{amsmath, amssymb, amsthm, amsfonts, graphicx, color, subcaption, enumerate, bm, array, mathtools}

\usepackage[colorlinks=true]{hyperref}
\definecolor{red}{HTML}{F88379}
\hypersetup{
    linkcolor=blue
    ,citecolor=orange
    ,filecolor=purple
    ,urlcolor=orange
    ,menucolor=purple
    ,runcolor=purple
}
\usepackage[capitalise,nameinlink]{cleveref}

\usepackage[colorinlistoftodos,prependcaption,textsize=tiny]{todonotes}

\usepackage{multicol, multirow}
\usepackage{thmtools, thm-restate}
\usepackage{ragged2e}
\usepackage{lineno}
\usepackage{framed}
\usepackage[framemethod=tikz]{mdframed}

\usepackage[linesnumbered, ruled, vlined]{algorithm2e}

\usepackage{subfiles}
\graphicspath{{graphics/}}

\crefname{claim}{Claim}{Claims}
\crefname{property}{Property}{Properties}
\crefname{algocf}{Algorithm}{Algorithms}
\Crefname{algocf}{Algorithm}{Algorithms}

\usepackage[style=trad-alpha,natbib=true,maxcitenames=4]{biblatex}
\usepackage{dirtytalk}

\makeatletter
\g@addto@macro\bfseries{\boldmath}
\makeatother

\newtheorem{lemma}{Lemma}[section]

\newcommand{\local}{\mathsf{LOCAL}}

\newcommand{\congest}{\mathsf{CONGEST}}
\newcommand{\mpc}{\mathsf{MPC}}

\newcommand{\eps}{\varepsilon}
\renewcommand{\epsilon}{\varepsilon}
\newcommand{\poly}{\operatorname{poly}}

\newcommand{\E}{\mathbb{E}}

\DeclareMathOperator{\dist}{dist}

\newcommand{\bp}{\mathsf{bp}}

\newcommand{\dtmax}{T_{\operatorname{DMM}}}
\newcommand{\rtmax}{T_{\operatorname{RMM}}}

\newcommand{\AMM}{\mathsf{AlmostMaximalMatching}}
\newcommand{\GPR}{\mathsf{GuardedProposalRound}}
\newcommand{\GQM}{\mathsf{GuardedQuantileMatch}}
\newcommand{\DGASM}{\mathsf{GuardedASM}}

\title{A Few Shared Random Bits Suffice \\ for Constant-Round Almost Stable Matching}

\author{Yi-Jun Chang\footnote{National University of Singapore. ORCID: 0000-0002-0109-2432. Email: cyijun@nus.edu.sg} \and Kushagra Chatterjee\footnote{Indian Statistical Institute, Kolkata. ORCID: 0009-0008-6838-2216. Email: kushagrachatterjee@gmail.com. The work was done when the author was a PhD student at the National University of Singapore.}}

\date{}

\begin{document}

\maketitle
\thispagestyle{empty}

\begin{abstract}
We show that almost stable matching can be solved in constant distributed
rounds on general bipartite graphs $G=(V,E)$ using only a few shared random bits.
Specifically, in the $\congest$ model, we compute a matching whose expected
number of blocking pairs is at most $\varepsilon |E|$ in
$O\left(\frac{\log(1/\varepsilon)}{\varepsilon^4}\right)$ rounds using
$O\left(\log(1/\varepsilon)\right)$ shared random bits.
Thus, for every constant $\varepsilon>0$, the round complexity is $O(1)$,
independent of the number of vertices and the maximum degree.

Previous algorithms achieve constant round complexity only for bounded-degree
or almost-regular graphs; on general graphs, their round complexity depends
polylogarithmically on $n$.
Our main technical idea is a degree-guarded freezing rule that allows
widely varying degrees to be handled by a single global charging
argument, avoiding the $\Theta(\log n)$ successive degree thresholds used
in previous work.
The shared random bits are used only to select a common random output
iteration.

As consequences, we obtain an
$O\left(
    \frac{\log(1/\varepsilon)}{\varepsilon^4}
    +
    \frac{\log n}{\varepsilon}
\right)$-round $\congest$ algorithm without pre-shared randomness, via a
low-diameter decomposition, and an
$O\left(\frac{\log(1/\varepsilon)}{\varepsilon^4}\right)$-round algorithm
in the fully-scalable Massively Parallel Computation ($\mpc$) model with
linear total memory.
\end{abstract}

\newpage
\bigskip
\tableofcontents
\bigskip
\thispagestyle{empty}

\newpage
\pagenumbering{arabic}

\section{Introduction}

The \emph{stable marriage} problem, introduced by Gale and Shapley~\cite{gale1962college}, is a classical problem at the intersection of algorithms, economics, and matching theory.
An instance consists of a bipartite graph $G=(Y\cup X,E)$ with $n=|Y\cup X|$ vertices, where the vertices in $Y$ and $X$ represent men and women, respectively, and each vertex has a strict preference ordering over its neighbors.
A matching $M$ is \emph{stable} if it has no \emph{blocking pair}: an
edge $\{m,w\}\in E\setminus M$ such that both endpoints prefer each other
to their partners in $M$, where being unmatched is considered worse than
being matched to any neighbor.

Gale and Shapley showed that every such instance admits a stable matching via a simple proposal algorithm.
In each round, every unmatched man proposes to his most preferred woman who has not yet rejected him; each woman keeps her favorite proposal seen so far and rejects the others.
Rejected men continue down their preference lists, and the process eventually reaches a stable matching.

This proposal process has a natural distributed interpretation, where each vertex is a processor and the edges of $G$ are communication links.
In the standard $\local$ and $\congest$ models, vertices communicate synchronously with their neighbors; message sizes are unbounded in the $\local$ model and limited to $O(\log n)$ bits per edge per round in the $\congest$ model.

Despite the simplicity of the Gale--Shapley algorithm, computing an exact stable matching is inherently \emph{global}:
Flor\'{e}en, Kaski, Polishchuk, and Suomela~\cite{floreen2010almost} showed that it requires $\Omega(n)$ rounds even in the $\local$ model.
Kipnis and Patt-Shamir~\cite{kipnis2009note} further showed an $\Omega\left(\sqrt{n/\log^2 n}\right)$ lower bound in the $\congest$ model, even on graphs of diameter $O(\log n)$ and when $O(\sqrt n)$ blocking pairs are allowed.

\paragraph{Almost stable matching.} These lower bounds motivate the study of \emph{almost stable matching}.
Following~\cite{eriksson2008instability,ostrovsky2015fast}, we say that a matching is $(1-\epsilon)$-stable if it has at most $\epsilon|E|$ blocking pairs.
The central question is whether allowing a small fraction of blocking pairs makes the problem \emph{local}, in the sense that it can be solved in a number of rounds independent of $n$.

For bounded-degree graphs, the answer is yes.
\citet{floreen2010almost} showed that truncating the Gale--Shapley algorithm after $O(\Delta^2/\varepsilon)$ rounds, where $\Delta$ is the maximum degree, already produces a matching $M$ with at most $\varepsilon|M|$ blocking pairs, and hence at most $\varepsilon|E|$ blocking pairs.
Thus, for bounded-degree graphs and constant $\varepsilon$, $(1-\varepsilon)$-stable matching can be solved in constant $\congest$ rounds.

Ostrovsky and Rosenbaum~\cite{ostrovsky2015fast} were the first to obtain an efficient algorithm for general bipartite graphs with arbitrarily large and highly nonuniform degrees.
They showed that a $(1-\epsilon)$-stable matching can be computed in
$O\left(\frac{\dtmax \log n}{\epsilon^3}\right)$ rounds deterministically and
$O\left(\frac{\rtmax \log n}{\epsilon^3}\right)$ rounds with probability $1-1/\poly(n)$, where $\dtmax$ and $\rtmax$ denote the deterministic and randomized round complexities of maximal matching, respectively.

The state-of-the-art upper bounds for maximal matching in the $\congest$ model are
$\dtmax=O(\log^2\Delta\log n)=O(\log^3 n)$~\cite{fischer2020improved} and
$\rtmax=O(\log\Delta+\log^3(\log n))=O(\log n)$~\cite{barenboim2016locality}.
Thus, the Ostrovsky--Rosenbaum algorithm takes 
$O\left(\frac{\log^4 n}{\epsilon^3}\right)$ deterministic rounds and
$O\left(\frac{\log^2 n}{\epsilon^3}\right)$ randomized rounds.
For comparison, in the deterministic $\local$ model, maximal matching can be computed in
$O(\log^{5/3} n \cdot \poly \log \log n)$ rounds~\cite{ghaffari2024near}.
For randomized algorithms, Khoury and Schild~\cite{khoury2025round}, improving on the classical lower bound of Kuhn, Moscibroda, and Wattenhofer~\cite{kuhn2016local}, showed that maximal matching requires
$\Omega\left(\min\{\log\Delta,\sqrt{\log n}\}\right)$
rounds even in the $\local$ model.
Thus, on general graphs, maximal matching necessarily incurs a polylogarithmic dependence on at least one of $n$ and $\Delta$.

For the special case of almost-regular graphs, Ostrovsky and Rosenbaum~\cite{ostrovsky2015fast} also gave an algorithm whose round complexity depends only on $\epsilon$ and the failure probability, and is therefore constant when these parameters are constant.

This leaves a basic question:

\begin{center}
    \emph{Can almost stable matching be solved in constant rounds on general bipartite graphs?}
\end{center}

\subsection{New Results}

We answer the above question in the affirmative, assuming a small amount of shared randomness: only $O(\log(1/\epsilon))$ shared random bits are needed to make the round complexity independent of $n$.

For a matching $M$, let $\bp(M)$ denote the number of blocking pairs with respect to $M$.

\begin{restatable}[Almost stable matching with shared randomness]{theorem}{maincongest}
    Let $G=(Y\cup X,E)$ be a stable marriage instance.
    For every $0<\epsilon\leq 1/2$, there is a randomized algorithm in the  $\congest$ model with $O(\log(1/\epsilon))$ bits of shared randomness that terminates in
    $O\left(\frac{\log(1/\epsilon)}{\epsilon^4}\right)$ rounds and outputs a matching $M$ satisfying
    $\E[\bp(M)]\leq \epsilon|E|$.
    \label{thm:main}
\end{restatable}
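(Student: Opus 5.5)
The proof follows the Ostrovsky--Rosenbaum template: run an approximate Gale--Shapley process in which men propose in quantile rounds. The new ingredient replaces their $\Theta(\log n)$ degree thresholds with a single degree-guarded freezing rule.

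\textbf{Quantile rounds.} Each man $m$ partitions his preference list into $k=\Theta(1/\epsilon)$ quantiles of consecutive neighbors, each of size about $\deg(m)/k$. In each proposal round, an active man proposes to every woman in his current best quantile that has not rejected him. Each woman tentatively accepts her best proposal and rejects the rest. A man moves to the next quantile once every woman in his current quantile has rejected him. Instead of a maximal matching between men and accepting women, the plan uses an $\AMM$ subroutine that runs in $O(1/\epsilon)$ rounds and leaves at most an $\epsilon$-fraction of the relevant edges ``bad''. This yields $\GQM$, one quantile-matching step of $O(\log(1/\epsilon)/\epsilon^2)$ rounds, or $O(1/\epsilon^2)$ if the logarithm can be absorbed elsewhere.

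\textbf{Degree-guarded freezing.} A woman $w$ becomes frozen, meaning she permanently stops accepting new partners, when the number of proposals she has received exceeds her degree by more than a threshold depending on $\epsilon$. A man $m$ is frozen when the processing his edges demand is disproportionate to $\deg(m)$. The rule is designed so that freezing is charged to edges incident to the frozen vertex, in proportion to its own degree.

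\textbf{Potential argument.} Each $\GPR$ either advances some man past a quantile or permanently settles edges. Weighting each quantile advance by the $\deg(m)/k$ edges it covers, the total potential is $k|E|$. After $T=\Theta(1/\epsilon^{2})$ guarded proposal rounds, averaging shows that in a uniformly random iteration $t\in[T]$ the expected number of edges still ``in flux'' is at most $\epsilon|E|/4$.

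\textbf{Random output iteration.} This is where the shared randomness enters. All vertices agree on $t$ using $O(\log T)=O(\log(1/\epsilon))$ shared bits and output the matching held at iteration $t$; since $T$ is polynomial in $1/\epsilon$, these bits suffice. Multiplying out, $T$ iterations of $O(\log(1/\epsilon)/\epsilon^2)$ rounds each gives the claimed $O(\log(1/\epsilon)/\epsilon^4)$ bound.

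\textbf{Accounting for blocking pairs.} With all four steps in place, the remaining work is to bound the blocking pairs of the output matching, splitting them into four types:
\begin{enumerate}
\item[(i)] pairs inside the man's current quantile, where quantile coarseness gives at most $|E|/k$;
\item[(ii)] edges incident to frozen vertices, bounded through the guard by $\epsilon|E|/4$;
\item[(iii)] edges left unhandled by $\AMM$, at most $\epsilon|E|/4$;
\item[(iv)] edges in flux at the random iteration $t$, with expectation at most $\epsilon|E|/4$.
\end{enumerate}
Summing the four bounds gives $\E[\bp(M)]\le\epsilon|E|$.

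\textbf{Main obstacle.} The hard step is the global charging argument in (ii) and (iv). High-degree women receive proposals from many low-degree men, so a naive count charges too much to edges of small-degree vertices. The guard has to be calibrated so that every frozen or unresolved edge is charged to an $\epsilon$-fraction of incident edges of a vertex whose degree is comparable to the charge. I would first try charging each frozen woman's excess proposals against her own degree, and each frozen man against his own edges. If that fails, I would fall back on a dyadic-weight argument summed over degree classes, using $\sum_v \deg(v)=2|E|$ so that no $\log n$ factor appears.
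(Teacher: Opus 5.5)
\textbf{Gap: displacement of high-degree men is never paid for.} Your outline has the paper's overall shape: quantile proposals, an almost-maximal matching instead of a maximal one, a shared random output iteration, and a four-way split of blocking pairs. But the step that makes the result work is missing, and you place the obstacle on the wrong side. The trouble is not high-degree women flooded by low-degree men. It is the \emph{displacement} of high-degree men. A man who is matched at the start of an iteration can be displaced when his partner takes a better proposal. He then ends the iteration unmatched with live edges, which puts his whole degree $d(m)$ back into the unresolved mass without exhausting any quantile. Your potential argument (every round advances a quantile or settles edges, total $k|E|$) has nothing to pay for this, and a man of large degree can be displaced repeatedly by women of small degree.

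The paper's guard targets exactly this. When a new pair $(m,w)$ with $d(m)>R\,d(w)$ forms, where $R=4/\epsilon$, every other live edge at $w$ is deleted, so that pair can never change. Hence every displacement of $m$ by $w$ satisfies $d(m)\le R\,d(w)$. Each partner change moves $w$ to a strictly better quantile, so she displaces fewer than $k$ times, and displacements contribute at most $kR|E|$ in total. This gives $\sum_t P_t\le k(R+1)|E|$, where $P_t$ is the total degree of unmatched men with live edges after iteration $t$. The freezing error is small because frozen men are distinct, so frozen women have total degree below $\sum_m d(m)/R=|E|/R$.

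Your rule yields neither inequality. A woman freezes once her received proposals exceed her own degree by a threshold, and men freeze under an unspecified condition. This never compares $d(m)$ with $d(w)$. Charging a frozen vertex's edges to its own degree also does not show that frozen vertices carry only an $\epsilon/4$ fraction of $E$.

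\textbf{Parameters and accounting also need repair.} Because of the extra factor $R$, the average of $P_t$ is $O(\epsilon|E|)$ only after $L=\Theta(1/\epsilon^3)$ quantile-match iterations, each consisting of $k$ proposal rounds of $O(\log(1/\epsilon))$ rounds. Your $T=\Theta(1/\epsilon^2)$ is too few. Your total comes out as $O(\log(1/\epsilon)/\epsilon^4)$ only because your per-iteration cost is off in the other direction.

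An almost-maximal matching that leaves an $\epsilon$-fraction of edges residual per call cannot give (iii), since there are $kL=\Theta(1/\epsilon^4)$ calls. You need a per-call fraction $\rho=\Theta(\epsilon^5)$. The truncated random-proposal matching reaches this in $\lceil\log_2(1/\rho)\rceil=O(\log(1/\epsilon))$ steps, because each step at least halves the expected number of residual edges.

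Finally, near pairs must be counted on both sides, giving $2|E|/k$. A rejected edge is excluded from the far pairs only because $w$ rejected $m$ after taking a partner in the same or a better quantile of hers. So if that edge later blocks, it is near on her side. This also requires each woman to accept all proposals from her best quantile and to reject only after actually obtaining a partner. Rejecting all but one proposal before the matching is computed, as your description suggests, can leave her unmatched after she has rejected men who then form blocking pairs with her.
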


Consequently, for every constant $\epsilon>0$, almost stable matching can be solved in constant $\congest$ rounds on general bipartite graphs.
By running the algorithm with parameter $\epsilon/2$ and applying Markov's inequality, we obtain a $(1-\epsilon)$-stable matching with probability at least $1/2$.

There are two sources of dependence on $n$ in the algorithm of Ostrovsky
and Rosenbaum~\cite{ostrovsky2015fast} for general bipartite graphs.
First, their proposal process repeatedly computes maximal matchings.
This dependence can be removed by relaxing maximality and using
almost-maximal matchings instead.

The second source is more fundamental.
Their algorithm handles highly nonuniform degrees through
$\Theta(\log n)$ successive degree thresholds, treating different degree
ranges one after another.
Our main technical contribution is to avoid these thresholds altogether.
We retain their quantile-based proposal framework, but introduce a
degree-guarded freezing rule that makes every displacement of a
high-degree man chargeable to the degree of the woman that displaced him.
This leads to a single global charging argument that handles all degrees
simultaneously.

The resulting bound is an average over the iterations of the algorithm,
rather than a guarantee for the final iteration.
This is where the shared randomness enters:
$O(\log(1/\epsilon))$ shared random bits suffice for all vertices to agree
on a uniformly random iteration whose matching is returned.
No other part of the algorithm requires shared randomness.

We can also remove the assumption of pre-shared randomness, at the cost of an additive $O\left(
        \frac{\log n}{\epsilon}
    \right)$ term in the round complexity.
The idea is to combine \Cref{thm:main} with the low-diameter decomposition of Miller, Peng, and Xu~\cite{miller2013parallel}. 

\begin{restatable}[Almost stable matching without shared randomness]{theorem}{decomposition}
\label{cor:decomposition}
    Let $G=(Y\cup X,E)$ be a stable marriage instance.
    For every $0<\epsilon\leq 1/2$, there is a randomized algorithm in the $\congest$ model, without pre-shared randomness, that terminates in
    $O\left(
        \frac{\log(1/\epsilon)}{\epsilon^4}
        +
        \frac{\log n}{\epsilon}
    \right)$
    rounds and outputs a matching $M$ satisfying
    $\E[\bp(M)]\leq \epsilon|E|$.
\end{restatable}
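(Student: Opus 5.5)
We prove \Cref{cor:decomposition} by running the algorithm of \Cref{thm:main} independently inside the clusters of a Miller--Peng--Xu low-diameter decomposition. Each cluster generates its own $O(\log(1/\epsilon))$ shared random bits, and we discard all matching edges and preferences across clusters.

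\textbf{Decomposition.} Run the MPX decomposition with parameter $\beta=\Theta(\epsilon)$. Each vertex $v$ draws an independent shift $\delta_v\sim\mathrm{Exp}(\beta)$, and each vertex $u$ joins the cluster of the center $v$ minimizing $\dist(u,v)-\delta_v$. With high probability every $\delta_v\le O(\log n/\beta)$, so the decomposition runs in $O(\log n/\epsilon)$ $\congest$ rounds. Each cluster then has strong radius $O(\log n/\epsilon)$: its BFS tree rooted at the center lies inside the cluster. The standard MPX guarantee is that every edge $e$ is cut, meaning its endpoints lie in different clusters, with probability at most $O(\beta)$. Choosing $\beta=c\,\epsilon$ for a small constant $c$ makes this at most $\epsilon/2$. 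The high-probability radius bound fails with probability $1/\poly(n)$. In that case, or after a timeout, we output the empty matching, which adds at most $|E|/\poly(n)\le \epsilon|E|/4$ to the expectation once $n$ is large. For small $n$ the $\log n/\epsilon$ term dominates, and we can simply run a truncated Gale--Shapley instead.

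\textbf{Shared bits.} Each center samples $O(\log(1/\epsilon))$ random bits and broadcasts them down its BFS tree. This fits in $O(1)$ messages of size $O(\log n)$ and finishes within $O(\log n/\epsilon)$ rounds.

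\textbf{Running the main algorithm.} Let $G_C$ be the subgraph induced by cluster $C$, with preferences restricted to neighbors in $C$. Run the algorithm of \Cref{thm:main} with parameter $\epsilon/2$ on all $G_C$ in parallel, using $C$'s shared bits. This takes $O(\log(1/\epsilon)/\epsilon^4)$ rounds. Let $M=\bigcup_C M_C$, which is a matching.

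\textbf{Counting blocking pairs.} Consider an uncut edge $\{m,w\}$ inside $C$ that blocks $M$. Both endpoints have the same partner in $M$ as in $M_C$, and the preference order among neighbors in $C$ is unchanged. So $\{m,w\}$ also blocks $M_C$ in $G_C$. Every blocking pair of $M$ is therefore either a cut edge or a blocking pair of some $M_C$.

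\textbf{The main obstacle} is that the clusters are random, so the guarantee of \Cref{thm:main} must be applied conditionally. Condition on the decomposition. The randomness inside the algorithm and the cluster bits are independent of the decomposition, so
\[
\E[\bp(M_C)\mid \text{decomp}]\le (\epsilon/2)|E(G_C)|,
\]
and summing over clusters gives at most $(\epsilon/2)|E|$. Adding the expected number of cut edges, at most $(\epsilon/2)|E|$, gives $\E[\bp(M)]\le\epsilon|E|$, after rescaling constants to absorb the failure term.

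One point needs care: \Cref{thm:main} must hold on any instance, including the induced subgraphs, with a bound of $\epsilon|E(G_C)|$ and no dependence on $n$. That is exactly the form in which it is stated.

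The total running time is $O(\log(1/\epsilon)/\epsilon^4+\log n/\epsilon)$.
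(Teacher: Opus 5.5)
Your proposal is correct and follows the paper's proof essentially step for step. The shared steps are: an MPX low-diameter decomposition with $\beta=\Theta(\epsilon)$; per-cluster shared bits broadcast from a leader over a spanning tree; running \Cref{thm:main} with parameter $\Theta(\epsilon)$ on each induced subgraph $G[C]$; counting every inter-cluster edge as a potential blocking pair; and applying \Cref{thm:main} conditionally on the decomposition. The only difference is that you unpack the MPX guarantee (exponential shifts and the $1/\poly(n)$ failure event for the radius bound), which the paper absorbs into its black-box \Cref{lem:congest-ldd}.
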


For constant $\epsilon$, this improves the randomized upper bound of Ostrovsky and Rosenbaum~\cite{ostrovsky2015fast} from
$O(\rtmax\log n)$ rounds to $O(\log n)$ rounds.
In particular, using the best known $\congest$ bound $\rtmax=O(\log n)$, the worst-case round complexity improves from $O(\log^2 n)$ to $O(\log n)$.
This improvement is not due to a weaker success guarantee: for constant $\epsilon$, their randomized bound remains $O(\log^2 n)$ even when the desired success probability is only a fixed constant.

The low-diameter decomposition partitions the graph into low-diameter clusters while cutting only a small fraction of the edges in expectation.
Within each cluster, the vertices can generate the shared random bits required by \Cref{thm:main} by broadcasting from a cluster center.
We then run the algorithm independently inside the clusters and simply count every inter-cluster edge as a potential blocking pair.

We also consider the \emph{Massively Parallel Computation} ($\mpc$) model.
In this model, a large graph is distributed among many machines with
limited local memory, and computation proceeds in synchronized rounds.
A central goal is to minimize the number of such rounds while keeping both the memory per machine and the total memory small.

We work in the \emph{fully-scalable} regime, where each machine has only $n^\delta$ words of local memory for an arbitrarily small fixed constant $\delta>0$.
This regime is particularly demanding for graph problems with high-degree vertices, since the neighborhood of a single vertex need not fit on one machine.
Nevertheless, our $\congest$ algorithm can be implemented in this model without increasing its asymptotic round complexity, and with total memory proportional to the input size.
Shared randomness incurs no additional asymptotic cost in the $\mpc$ model, since a common random seed can be generated and broadcast in $O(1)$ rounds.

\begin{restatable}[Almost stable matching in fully-scalable MPC]{theorem}{massivelyparallel}
\label{cor:mpc}
    Let $G=(Y\cup X,E)$ be a stable marriage instance.
    For every fixed constant $\delta\in(0,1)$ and every $0<\epsilon\leq 1/2$, there is a randomized
    algorithm in the fully-scalable $\mpc$ model with $n^\delta$ words of
    local memory per machine and $O(n+|E|)$ words of total memory that
    terminates in $O\left(\frac{\log(1/\epsilon)}{\epsilon^4}\right)$
    rounds and outputs a matching $M$ satisfying
    $\E[\bp(M)]\leq \epsilon|E|$.
\end{restatable}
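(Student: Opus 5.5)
Our plan is a simulation argument: we run the $\congest$ algorithm of \Cref{thm:main} in the fully-scalable $\mpc$ model, spending $O(1)$ $\mpc$ rounds on each $\congest$ round. Since \Cref{thm:main} uses $O(\log(1/\epsilon)/\epsilon^4)$ rounds, the round bound of \Cref{cor:mpc} follows. The distribution of the output matching is the same as in the $\congest$ algorithm, so the guarantee $\E[\bp(M)]\le\epsilon|E|$ carries over directly.

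\textbf{Shared randomness.} A single machine draws the $O(\log(1/\epsilon))$ shared random bits, which fit in one word. It then broadcasts them down an $n^{\delta}$-ary tree over all machines. This tree has depth $O(1/\delta)=O(1)$.

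\textbf{Simulating the rounds.} High degrees are the main issue, since a vertex's neighbourhood may exceed $n^\delta$. Each vertex $v$ therefore receives a tree of machines of fan-out $n^{\delta/2}$ whose leaves hold its incident edges, together with the per-edge state: whether $v$ has been rejected by or frozen against that neighbour, its rank in $v$'s preference list, and so on. The preprocessing is a sort of the edges by (endpoint, preference rank), and fully-scalable sorting takes $O(1/\delta)$ rounds with linear total memory. Both copies of an edge keep a pointer to each other, so a message along an edge is delivered by one more sort or routing step.

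Each step of the algorithm must then reduce to primitives that take $O(1)$ $\mpc$ rounds on these trees:
\begin{itemize}
\item broadcasting a vertex's state (matched partner, current quantile threshold, frozen/guard flag, degree) to all its edges;
\item aggregating over $v$'s edges with an associative operator (counts, minimum or maximum rank among active proposals, sums);
\item computing prefix sums or ranks inside a neighbourhood, e.g.\ to locate the next quantile block of a man's preference list or the woman's best proposal.
\end{itemize}

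\textbf{Main obstacle.} We expect the hardest step to be the almost-maximal-matching subroutine and any other random local choices inside the proposal rounds. Local coin flips are free, since each vertex (or edge) uses its own private randomness. However, every $\congest$ round in which a high-degree vertex sends different messages to different neighbours has to be expressed through the primitives above, for example as broadcast plus per-edge computation, or as a prefix computation. To handle this, we would go through the algorithm's description and check that each vertex's per-round computation is a function of $O(1)$ aggregates of its incident edges' states.

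\textbf{Memory.} The trees use $O(\deg(v)/n^{\delta})+O(1)$ machines per vertex, so the total memory is $O(n+|E|)$ words. Since the $\congest$ round count is independent of $n$, the total $\mpc$ round count is $O(\log(1/\epsilon)/\epsilon^4)$ times a constant depending only on $\delta$.
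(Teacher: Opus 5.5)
Your plan is essentially the paper's approach, and the parts you give are fine. You lay out each vertex's incident edges as a consecutive block (the paper sorts records by vertex identifier instead of building explicit machine trees), use in-block broadcast, aggregation and prefix sums in $O(1)$ rounds, and have one machine sample the shared randomness and broadcast it.

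The weak point is the framing ``simulate each $\congest$ round in $O(1)$ $\mpc$ rounds.'' In the fully-scalable regime this proves nothing by itself: a vertex of degree much larger than $n^\delta$ may in general compute an arbitrary function of all its incoming messages. The substance of the theorem is exactly the check you defer, namely that every step of $\GQM$, $\GPR$ and $\AMM$ reduces to block primitives. That check does go through, and you should carry it out:
\begin{itemize}
\item An unmatched man's first quantile containing a live edge is a minimum-aggregate over his block, and so is a woman's best quantile containing a proposal.
\item Proposing is a per-edge test after broadcasting the man's state (unmatched, active, selected quantile).
\item In each $\AMM$ step, a free man counts his free neighbours in $H$ by an aggregate, numbers them by a prefix sum, draws one uniform index and broadcasts it to his block. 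Each woman then keeps, say, the minimum-identifier proposal.
\item Residual edges are identified per edge, and the displaced man's partner field is reset via a sort.
\item For the degree guard, $d(m)$ is routed along the new matching edge to $w$. She broadcasts the frozen flag and $q_w(m)$, so each of her edges decides locally whether it is deleted.
\end{itemize}
This gives $O(\log(1/\rho))$ rounds per call to $\GPR$, hence $O(Lk\log(1/\rho))=O(\log(1/\eps)/\eps^4)$ rounds overall.

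Two smaller repairs are also needed. First, your memory accounting is wrong as written. Giving every vertex $O(\deg(v)/n^\delta)+O(1)$ machines of capacity $n^\delta$ each yields $\Theta(|E|+n^{1+\delta})$ total memory, not $O(n+|E|)$. The blocks or trees of many low-degree vertices must share machines. This is automatic in the paper's layout, where all records sit in one global sorted order. Second, the shared index fits in one word only because of the standing assumption $\eps|E|\geq 1$, which gives $1/\eps\leq n^2$; you should state this.
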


To the best of our knowledge, \Cref{cor:mpc} is the first algorithm for almost stable matching in the $\mpc$ model.

To complement our upper bounds, the $\Omega(n)$-round lower bound proof of Flor\'{e}en, Kaski, Polishchuk, and Suomela~\cite{floreen2010almost} for exact stable matching immediately yields an $\Omega(1/\epsilon)$ lower bound for almost stable matching, even allowing shared randomness.

\begin{restatable}[Lower bound for almost stable matching]{theorem}{lowerbound}
In the $\local$ model with shared randomness, any algorithm that computes a $(1-\epsilon)$-stable matching with probability greater than $1/2$ requires $\Omega(1/\epsilon)$ rounds.
\label{thm:lb}
\end{restatable}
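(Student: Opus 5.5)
The plan is to adapt the Floréen--Kaski--Polishchuk--Suomela lower-bound construction for exact stable matching. That construction is a family of instances on which a local change at one end of a long path-like gadget forces the unique stable matching to change along the whole gadget. We will replace the single long instance by many disjoint copies, each of length $\Theta(1/\epsilon)$, so that locality yields $\Omega(1/\epsilon)$ rather than $\Omega(n)$.

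\textbf{The gadget.} Consider a bipartite "ladder" $H_L$ on $\Theta(L)$ vertices with $\Theta(L)$ edges and two variants $H_L^0, H_L^1$. The variants differ only in the preferences of the vertices at one end, say the left end. Each variant has a unique stable matching, and these two matchings differ at the right end. The standard construction is a path or cycle whose preferences make Gale--Shapley propagate a chain of rejections from one end to the other.

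We need a quantitative strengthening. In $H_L^b$, any matching that agrees near the right end with the stable matching of $H_L^{1-b}$ must have at least one blocking pair. This should follow from uniqueness: if a matching of $H_L^b$ has zero blocking pairs, it is stable, hence equals the unique stable matching, which disagrees at the right end.

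\textbf{Counting and locality.} Take $G$ to be the disjoint union of $k$ copies of $H_L$ with $L = c/\epsilon$, and give each copy an independently uniform bit $b_i$. Then $|E| = \Theta(kL)$, so $\epsilon|E| = \Theta(kc)$. Choosing the constant in the gadget size so that $\epsilon |E| < k/2$, a $(1-\epsilon)$-stable matching must make more than half of the copies exactly stable.

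Now suppose the algorithm runs in $r < L/2$ rounds, possibly with shared randomness. The output at the right end of copy $i$ then depends only on the shared random string, on private randomness, and on the input within distance $r$. That input excludes $b_i$. So, conditioned on the shared randomness, the right-end output of copy $i$ is independent of $b_i$. It therefore matches the stable matching of $H^{b_i}_L$ with probability at most $1/2$, independently across copies given the shared randomness, since copies are disjoint and private randomness is independent.

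The number of exactly stable copies is thus dominated by $\mathrm{Bin}(k,1/2)$. The probability that it exceeds $k/2$ is at most $1/2$. (For the ``greater than $1/2$'' threshold, use $\epsilon|E| \le k/2$ with ties, or pick $k$ odd.) This contradicts a success probability greater than $1/2$, so $r=\Omega(L)=\Omega(1/\epsilon)$.

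\textbf{Main obstacle.} The hardest step is exhibiting the gadget: a constant-degree bipartite instance of length $L$ with a unique stable matching whose right end is determined by a preference bit at the left end. I would take it directly from the Floréen et al.\ construction. The fallback is to build it explicitly as a path $m_1 w_1 m_2 w_2\cdots$. In one variant, $m_1$'s favorite is $w_1$, which cascades so that each $m_j$ is matched to $w_j$. In the other variant, $w_1$ is unavailable to $m_1$, which shifts the matching to $w_{j}m_{j+1}$ throughout. I would then check by induction along the path that the stable matching is unique in each variant.

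A secondary subtlety is $n$ versus $k$ and the case of unknown $n$. Since $k$ copies are used, vertices of different copies must be indistinguishable locally, which holds automatically for disjoint isomorphic copies with IDs assigned arbitrarily.
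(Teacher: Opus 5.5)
Your proof is correct, but it takes a different route from the paper's. The paper uses a single Flor\'{e}en et al.\ path $v_0,\ldots,v_k$ with $\lceil 1/\epsilon\rceil-1<1/\epsilon$ edges. Then $\epsilon|E|<1$, so every $(1-\epsilon)$-stable matching is exactly stable. The unique stable matching flips with the bit at $v_1$, so $v_k$'s own output reveals $b$. A direct reduction to the mailing problem finishes the proof: with fewer than $\dist(v_1,v_k)$ rounds, the output of $v_k$ has the same distribution for $b=0$ and $b=1$, even given the shared seed.

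Your argument with a single copy is essentially this proof, up to a factor $2$ in the gadget length. So the copies are not what turns $\Omega(n)$ into $\Omega(1/\epsilon)$; a single gadget of length $\Theta(1/\epsilon)$ already does that.

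What the copies buy is a stronger statement. Your hard instances can have arbitrarily many vertices and $\epsilon|E|$ arbitrarily large. Hence the $\Omega(1/\epsilon)$ bound holds for every sufficiently large $n$. It does not rest on the degenerate regime $\epsilon|E|<1$, where almost stable and exactly stable matching coincide and which the paper's upper bounds treat separately.

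The price is a probabilistic layer, which you handle correctly:
\begin{itemize}
\item conditioning on the shared seed;
\item independence of the right-end outputs across components given the seed;
\item stochastic domination by $\mathrm{Bin}(k,1/2)$;
\item averaging over the bits $b_i$ to extract a fixed bad instance.
\end{itemize}

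One small point: keep the strict inequality $\epsilon|E|<k/2$, which already gives success probability at most $1/2$ for every $k$. The non-strict variant in your parenthetical fails for even $k$, since $\Pr[\mathrm{Bin}(k,1/2)\ge k/2]>1/2$.
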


Indeed, their hard instance can be taken to have $n=\Theta(1/\epsilon)$ vertices and fewer than $1/\epsilon$ edges.
Hence, any $(1-\epsilon)$-stable matching on this instance has no blocking pair and is therefore exactly stable. For completeness, we include the full lower bound proof in \Cref{sec:LB}.

\subsection{Additional Related Work}

Stable matching has attracted substantial attention across distributed and local computation, under a variety of communication models, structural assumptions, and fault models.
Khanchandani and Wattenhofer~\cite{khanchandani2017distributed} considered complete bipartite graphs in which the preference lists on one side are similar.
Amira, Giladi, and Lotker~\cite{amira2010distributed} studied weighted stable matching on complete bipartite graphs, where the preference rankings are induced by edge weights, and gave an $O(\sqrt n)$-round distributed algorithm.
Hassidim, Mansour, and Vardi~\cite{hassidim2016local} studied stable matching in the framework of local computation mechanism design.
Laveau, Manoussakis, Beauquier, Bernard, Burman, Cohen, and Pilard~\cite{laveau2017self} studied self-stabilizing distributed stable matching.
More recently, Constantinescu, Dufay, Ghinea, and Wattenhofer~\cite{constantinescu2025byzantine} studied stable matching in the presence of Byzantine participants and characterized its solvability under different communication and cryptographic assumptions.

Several notions of approximate stability have been considered.
A natural approach is to measure instability by the number of blocking pairs.
Eriksson and H\"aggstr\"om~\cite{eriksson2008instability} discuss several measures of instability and argue in favor of this blocking pair measure.
Ostrovsky and Rosenbaum~\cite{ostrovsky2015fast} measure the number of blocking pairs relative to the number of edges, which is also the notion we use.
Flor\'{e}en, Kaski, Polishchuk, and Suomela~\cite{floreen2010almost} use a closely related measure, normalizing instead by the size of the matching.
Kipnis and Patt-Shamir~\cite{kipnis2009note} consider a different, rank-sensitive notion, in which a pair is considered blocking only when both endpoints improve sufficiently in their preference rankings.
Their rank-sensitive notion is incomparable with the measures based on counting the number of blocking pairs.

The \emph{stable roommates} problem extends stable marriage from bipartite graphs to general graphs, where a stable matching need not exist.
\citet{abraham2005almost} showed that, in the centralized setting, minimizing the number of blocking pairs is $\mathrm{NP}$-hard and cannot be approximated within a factor of $n^{1/2-\delta}$ for any constant $\delta>0$, unless $\mathrm{P}=\mathrm{NP}$.

\subsection{Roadmap}
In \Cref{sec:preliminary}, we introduce the models and terminology used throughout the paper.
In \Cref{sec:technical-overview}, we give a technical overview of our results and the main ideas behind the algorithms.
In \Cref{sec:constant-round-congest}, we present and analyze our constant-round algorithm for almost stable matching with shared randomness.
In \Cref{sec:removing-shared-randomness}, we remove the shared randomness assumption using a low-diameter decomposition.
In \Cref{sec:mpc}, we adapt our algorithm to the $\mpc$ model.
In \Cref{sec:conclusions}, we discuss open questions and directions for future work.
Finally, in \Cref{sec:LB}, we give a proof of the $\Omega(1/\epsilon)$ lower bound for almost stable matching.

\section{Preliminaries}
\label{sec:preliminary}

In this section, we fix the notation used throughout the paper, introduce the notion of almost-maximal matching used by our algorithm, and describe the distributed and parallel computation models that we consider.

\paragraph{Stable marriage notation.}
Recall that a stable marriage instance is a bipartite graph
$G=(Y\cup X,E)$, where $Y$ and $X$ denote the sets of men and women,
respectively, and every vertex has a strict preference ordering over its
neighbors.
Throughout the paper, we write $n=|Y\cup X|$ and $d(v)=\deg_G(v)$.

For a matching $M$, let $p_M(v)$ denote the partner of $v$ in $M$,
with $p_M(v)=\bot$ if $v$ is unmatched.
When the matching is clear from context, we simply write $p(v)$.
Recall that $\bp(M)$ denotes the number of blocking pairs with respect to
$M$; thus, $M$ is $(1-\epsilon)$-stable if
$\bp(M)\leq\epsilon|E|$.

For our upper bounds, we may assume that $\epsilon|E|\geq 1$, which implies
$1/\epsilon\leq |E|\leq n^2$.
Indeed, if $\epsilon|E|<1$, then integrality of $\bp(M)$ implies that
every $(1-\epsilon)$-stable matching is stable.
In this case, we may simply run the exact Gale--Shapley algorithm, whose
round complexity is dominated by the upper bounds stated in this paper.

For $u\in N(v)$, let $P_v(u)$ denote the rank of $u$ in the preference
list of $v$, where a smaller rank indicates a stronger preference. As usual, every neighbor is preferred to being unmatched, so we use the convention
\[
    P_v(\bot)=d(v)+1.
\]

\paragraph{Almost-maximal matching.}
For a matching $M\subseteq E$ of a graph $G=(V,E)$, an edge
$\{u,v\}\in E$ is \emph{residual} if neither endpoint is matched in $M$.
Thus, $M$ is \emph{maximal} if it has no residual edge.

Our algorithm uses a randomized matching procedure whose expected number
of residual edges is small.
This is an edge-based analogue of the vertex-based almost-maximal
matching notion used by Ostrovsky and Rosenbaum~\cite{ostrovsky2015fast},
where the error is measured by the number of vertices incident to
residual edges. Counting the number of residual edges makes sense for our setting because the
approximation guarantee is itself measured by the number of blocking
pairs.

\paragraph{Distributed models.}
We work in the standard $\local$~\cite{linial1992locality} and
$\congest$~\cite{peleg2000distributed} models, where
the input graph is also the communication network: each vertex is a
processor, and adjacent vertices communicate in synchronous rounds.
Each vertex initially knows its own preference list, which side of the
bipartition it belongs to, and an $O(\log n)$-bit unique identifier.
Local computation is unrestricted.
In the $\local$ model, message sizes are unbounded, whereas in the
$\congest$ model, each vertex can send an $O(\log n)$-bit message to
each neighbor in each round.

A randomized distributed algorithm may use private random bits generated
independently at the vertices.
When we say that an algorithm uses $b$ bits of \emph{shared randomness},
we mean that all vertices additionally have access to the same $b$
independent unbiased random bits before the execution begins.
The cost of generating these bits is not included in the round
complexity.
In contrast, an algorithm \emph{without pre-shared randomness} starts
without such a common random string.

\paragraph{Massively parallel computation.}
The Massively Parallel Computation ($\mpc$) model was introduced by
Karloff, Suri, and Vassilvitskii~\cite{karloff2010model} as a theoretical
abstraction of large-scale parallel processing frameworks such as
MapReduce~\cite{dg04}.
The input is distributed among many machines, each with $S$ words of
local memory, where one word contains $O(\log n)$ bits.
Computation proceeds in synchronous rounds.
In each round, every machine may perform arbitrary local computation and
send and receive a total of $O(S)$ words to and from the other machines.
In addition to the round complexity and local memory, we also consider
the total memory, defined as the sum of the local memories over all
machines.

The local memory $S$ plays a fundamental role in the power of the
$\mpc$ model.
The literature commonly distinguishes three regimes:
\begin{itemize}
    \item The \emph{strongly superlinear} regime, where
    $S=\Theta(n^{1+\gamma})$ for some constant $\gamma>0$.
    \item The \emph{nearly-linear} regime, where $S$ is roughly linear
    in $n$, up to polylogarithmic factors.
    \item The \emph{fully-scalable} regime, where $S=\Theta(n^\delta)$ for an
    arbitrarily small constant $\delta\in(0,1)$.
\end{itemize}

The round complexity of basic graph problems can differ substantially
across these regimes.
For example, for problems such as maximal matching, maximal independent
set, approximate maximum matching, and approximate minimum vertex cover,
constant-round algorithms are known in the strongly superlinear
regime~\cite{harvey2018greedy,lattanzi2011filtering}, while nearly-linear
memory permits $O(\log\log n)$-round
algorithms~\cite{behnezhad2023exponentially,ghaffari2018improved}.
In the fully-scalable regime, the best known general-graph bounds for
these problems are polylogarithmic: Ghaffari and
Uitto~\cite{ghaffari2019sparsifying} give
\(
    O\left(
        \sqrt{\log\Delta}\log\log\Delta
        +
        \sqrt{\log\log n}
    \right)
\)-round algorithms.
This illustrates why obtaining fast algorithms in the fully-scalable
regime can be substantially more challenging.

Compared with the classical $\mathsf{PRAM}$ model of parallel computation, the
$\mpc$ model allows substantial local computation between global
communication rounds.
Its main complexity measure is therefore the number of such rounds,
rather than the amount of local computation performed within each round, making it well suited to modeling modern large-scale parallel
data-processing systems.

\section{Technical Overview}
\label{sec:technical-overview}

In this section, we explain the main ideas behind our algorithm.
We first review the quantile-based proposal framework of Ostrovsky and
Rosenbaum~\cite{ostrovsky2015fast}, focusing on the two sources of
dependence on $n$ in their algorithm.
We then explain how our algorithm removes both dependencies.

\subsection{The Ostrovsky--Rosenbaum Algorithm}

The algorithm of Ostrovsky and Rosenbaum~\cite{ostrovsky2015fast} can be
viewed as a quantized version of the Gale--Shapley proposal process.
Fix $k=\Theta(1/\epsilon)$.
Each vertex partitions its preference list into $k$ consecutive
\emph{quantiles} of approximately equal size, ordered from most preferred
to least preferred.

\paragraph{Proposal round.}
The basic operation is a \emph{proposal round}.
Each active unmatched man has a current quantile and proposes to every
woman in this quantile who has not yet rejected him.
Initially, a man starts with his first quantile; as the algorithm
progresses, he considers his most-preferred quantile that still contains
a possible partner.
A woman may receive proposals from several different quantiles of her own
preference list; she accepts all proposals belonging to the best such
quantile.

The accepted proposal edges form a bipartite graph $H$, which need not be
a matching.
Ostrovsky and Rosenbaum compute a maximal matching $N$ in $H$ to decide
which of these accepted proposals become matches.
Each woman incident to $N$ takes her partner in $N$, possibly displacing
her previous partner, while a woman not incident to $N$ keeps her current
partner.
A woman taking a new partner rejects all other men in the same or a worse
quantile than her new partner.
Thus, once a woman becomes matched she remains matched, and whenever she
changes partners she moves to a strictly better quantile.

\paragraph{Quantile match.}
Ostrovsky and Rosenbaum group $k$ proposal rounds into a procedure called
\emph{quantile match}.
At the beginning of the procedure, each unmatched man works on his
most-preferred nonempty quantile.
The guarantee is that, by the end of the procedure, every such man is
either matched to a woman in that quantile or has been rejected by every
woman in that quantile.

To see why $k$ proposal rounds suffice, consider a woman $w$ that receives
proposals.
If $w$ is matched to one of the men in her best quantile containing a
proposal, she rejects the other relevant men and receives no further
proposal during this execution of quantile match.
Otherwise, by maximality of $N$, all men in that best quantile whose
proposals she accepted must have been matched to other women.
Hence, if $w$ receives proposals in a later proposal round, her best
quantile containing a proposal must be strictly worse than before.
Since $w$ has at most $k$ quantiles, this can happen at most $k$ times.
Consequently, after $k$ proposal rounds, every man active at the beginning
of quantile match is either matched or has exhausted his current
quantile.

To see what repeated calls to quantile match achieve, first suppose that
all men participate.
Call a man \emph{good} if he is matched or has been rejected by every
woman in his original preference list, and \emph{bad} otherwise. 
Suppose that after $\ell$ calls to quantile match, $b$ men remain bad.
Although an individual matched man may later be displaced, the number of
good men cannot decrease throughout the process.
Indeed, the number of matched men cannot decrease, while a man rejected
by every woman in his original preference list can never become bad.
Hence, there were at least $b$ bad men in each of the $\ell$ calls.

Each bad man witnesses progress through one quantile in every call.
Across the entire process, the men can collectively exhaust at most
$k|Y|$ of their own quantiles.
Moreover, once a woman becomes matched, she remains matched, so at most
$|Y|$ women ever become matched.
Each such woman can trigger progress through at most $k$ quantiles as she
becomes matched and subsequently changes partners, giving at most another
$k|Y|$ quantile progress events.
Thus, there are at most $2k|Y|$ such events in total, and hence
\[
    b\ell \leq 2k|Y|.
\]
Since $k=\Theta(1/\epsilon)$, after
$\ell=\Theta(1/\epsilon^2)$ calls, only an $O(\epsilon)$ fraction of the
men remain bad.

\paragraph{Almost-regular graphs.}
For almost-regular graphs, this is already enough.
If the degrees of the men differ by at most a constant factor, then an
$O(\epsilon)$ fraction of bad men can be incident to only
$O(\epsilon|E|)$ edges.
Moreover, Ostrovsky and Rosenbaum replace maximal matching by a
vertex-based almost-maximal matching in this setting.
Thus, for constant $\epsilon$, constant degree ratio, and constant failure
probability, their round complexity is independent of $n$.

\paragraph{Arbitrary degrees.}
When the degrees vary arbitrarily, a small fraction of bad men is no
longer sufficient: a few bad men may have very large degrees and account
for many blocking pairs.
Ostrovsky and Rosenbaum overcome this issue by processing the men using
geometrically increasing thresholds on their number of remaining possible
partners.

More precisely, the algorithm has $\Theta(\log n)$ iterations.
In iteration $i$, men with at least $2^i$ remaining possible partners
participate in repeated calls to quantile match.
After sufficiently many calls, only a small fraction of these men remain
bad.
The algorithm then proceeds to iteration $i+1$, where the threshold is
doubled to $2^{i+1}$.
Any bad man left behind after iteration $i$ therefore has fewer than
$2^{i+1}$ remaining possible partners, whereas the men considered in
iteration $i$ started with at least $2^i$.
Thus, up to a constant factor, a bad man cannot contribute more blocking
pairs than a good man from the same iteration has remaining incident
edges.
Since the bad men form only a small fraction, their contribution can be
charged to the much more numerous good men. 

\subsection{Our Algorithm}

The preceding discussion reveals two sources of dependence on $n$ in the
Ostrovsky--Rosenbaum algorithm.
First, every proposal round computes a maximal matching.
Second, handling arbitrary degrees requires $\Theta(\log n)$ iterations
with geometrically increasing thresholds.

Our algorithm retains the quantile-based proposal framework of
Ostrovsky and Rosenbaum, but changes how the two sources of dependence on
$n$ are handled.
The dependence coming from maximal matching can be removed by relaxing
the maximality requirement.
The more substantial issue is the $\Theta(\log n)$ sequence of degree
thresholds.
Our main idea is a degree guard that makes it possible to replace these
thresholds by a single global charging argument that handles all degrees
simultaneously.

\paragraph{Replacing maximal matching.}
In each proposal graph $H$, instead of computing a maximal matching, we
compute an edge-based almost-maximal matching.
By truncating a simple randomized maximal matching procedure for
bipartite graphs, we can make the expected fraction of residual edges at
most $\rho$ in $O(\log(1/\rho))$ rounds.
Our algorithm makes only $O(1/\epsilon^4)$ such calls, and every proposal
graph has at most $|E|$ edges.
Thus, setting $\rho=\Theta(\epsilon^5)$ ensures that only
$O(\epsilon|E|)$ residual edges are deleted in expectation.

\paragraph{Degree guard.}
The key obstacle to removing the degree thresholds is displacement.
Without the thresholds, a high-degree man may remain matched for a long
time and be displaced only near the end of the algorithm.
Such a displacement can make him unmatched while leaving live edges,
without exhausting a quantile.
This can cause the total degree of the unmatched men with live edges to
increase sharply.

Our main idea is a simple \emph{degree guard}.
Set $R=\Theta(1/\epsilon)$.
Whenever a new pair $(m,w)$ is formed with
$d(m)>R d(w)$, we \emph{freeze} the pair by deleting all other live
edges incident to $w$.
Since proposals are made only through live edges, the pair can never
subsequently be changed.
Thus, a man whose degree is much larger than his partner's can never
later become unmatched through displacement.
On the other hand, whenever a man $m$ is displaced by a woman $w$, we
necessarily have $d(m)\leq R d(w)$.
This gives exactly the degree comparison needed to charge the degree of
a displaced man to the woman who displaced him.

Freezing itself introduces little error.
For every frozen pair $(m,w)$, the degree of $w$ is less than
$d(m)/R$.
Since the frozen pairs form a matching, their male endpoints are
distinct, and their degrees sum to at most $|E|$.
Hence, the total degree of all frozen women is at most
$|E|/R=O(\epsilon|E|)$.
We can therefore pessimistically count every edge incident to a frozen
woman as a potential blocking pair.

\paragraph{A random iteration.}
With the degree guard in place, we can remove the degree thresholds
entirely and simply run quantile match repeatedly.
We refer to each call to quantile match as an \emph{iteration}.
The remaining difficulty is that displacement can cause a high-degree man
to become unmatched while retaining live edges, so the total degree of
the unmatched men with live edges can increase sharply.

We overcome this by showing that this sum of degrees is small on average
over the iterations.
After iteration $t$, let $B_t$ be the set of unmatched men that still
have at least one live edge, and let
$P_t=\sum_{m\in B_t}d(m)$.
Consider a man $m\in B_t$.
There are two possibilities.

If $m$ was already unmatched at the beginning of iteration $t$, then
quantile match exhausts his current quantile.
Because $m$ still has a live edge afterward, this is not his final
original quantile, and hence it contains $\Omega(d(m)/k)$ original edges.
Thus, its exhaustion can pay for $d(m)$ with a factor $k$.
As every original quantile is exhausted at most once, these charges sum
to at most $k|E|$ over all iterations.

Otherwise, $m$ was matched at the beginning of iteration $t$ and became
unmatched when some woman $w$ displaced him.
The degree guard then guarantees $d(m)\leq R d(w)$.
Each woman changes partners at most $k$ times, so all such displacement
charges sum to at most $kR|E|$.
In other words, every contribution to $P_t$ is
charged either to a permanently exhausted quantile or to an improvement
of a woman's partner.
Combining the two cases,
\[
    \sum_t P_t
    \leq k|E|+kR|E|
    =k(R+1)|E|,
\]
where $k=\Theta(1/\epsilon)$ and $R=\Theta(1/\epsilon)$.

We run $\Theta(1/\epsilon^3)$ iterations.
It follows that the average value of $P_t$ over these iterations is
$O(\epsilon|E|)$.
However, the algorithm does not know which iteration has a small value
of $P_t$.
This is where the shared randomness is used.
With only $O(\log(1/\epsilon))$ shared random bits, all vertices choose
the same uniformly random iteration and output the matching saved after
that iteration.
The expected value of $P_t$ for the selected iteration is $O(\epsilon|E|)$.

Thus, rather than handling different degree ranges through
$\Theta(\log n)$ successive thresholds, the degree guard allows all
degrees to be handled simultaneously by a single global charging argument.
The shared randomness is then needed only to select the output iteration.

\section{Almost Stable Matching With Shared Randomness}
\label{sec:constant-round-congest}

In this section, we prove \Cref{thm:main}.
We describe the algorithm in \Cref{subsec:algorithm} and analyze its
correctness and round complexity in \Cref{subsec:analysis}.

\subsection{The Algorithm}
\label{subsec:algorithm}

Recall that $d(v)$ denotes the degree of $v$ in the original input graph
and $P_v(u)$ denotes the rank of $u$ in the preference list of $v$.
All degrees and quantiles below are defined with respect to the original
input graph and remain fixed throughout the algorithm.

Set
\begin{equation}
\label{eq:parameters}
    k=\left\lceil\frac{8}{\eps}\right\rceil,
    \qquad
    R=\frac{4}{\eps},
    \qquad
    L=
    2^{\left\lceil
        \log_2\left(\frac{4k(R+1)}{\eps}\right)
    \right\rceil},
    \qquad
    \rho=\frac{\eps}{4kL}.
\end{equation}

For each vertex $v$, set
\[
    s_v=\left\lceil\frac{d(v)}{k}\right\rceil
\]
and partition its preference list into consecutive quantiles
\begin{equation*}
    Q_i^v=
    \left\{
        u\in N(v):
        (i-1)s_v<P_v(u)\leq \min\{is_v,d(v)\}
    \right\},
    \qquad i\in\{1,\ldots,k\}.
\end{equation*}
The nonempty quantiles partition $N(v)$, and every nonempty quantile
except possibly the last has size $s_v$.
For an edge $(u,v)$, let $q_v(u)$ denote the index of the quantile of
$v$ containing $u$.

Throughout the algorithm, every edge is either \emph{live} or
\emph{deleted}.
Initially, all edges are live, and deletion is permanent.
Edges may be deleted for several reasons.
When the degree guard does not apply and a woman obtains a new partner,
she \emph{rejects} all other men in the same or a worse quantile by
permanently deleting the corresponding edges.
Thus, rejection is one particular type of edge deletion; other deletions
arise from residual edges of the almost-maximal matching procedure and
from the degree guard.
We will use this distinction in the analysis.

The algorithm also maintains a persistent matching $M$, represented by
the partner variables $p(v)$, with $p(v)=\bot$ initially.

\paragraph{Almost-maximal matching.}
We first describe the almost-maximal matching subroutine used in every
proposal round.
We obtain it by truncating a very simple and natural randomized procedure for maximal
matching on bipartite graphs:
in each step, every free man proposes to a uniformly random free neighbor,
every free woman receiving at least one proposal accepts one arbitrarily.
Such a maximal
matching algorithm has been used previously; see, for example, \citet{grandoni2008distributed}.

\begin{algorithm}[H]
\caption{$\AMM(H,\rho)$}
\label{alg:edge-amm}
\KwIn{A bipartite graph $H=(Y_H\cup X_H,E_H)$ and a parameter
$0<\rho<1$.}
\KwOut{A matching $N\subseteq E_H$ and the residual edge set $Z_H$.}
$N\leftarrow\varnothing$ and mark every vertex free\;
$s\leftarrow\left\lceil\log_2(1/\rho)\right\rceil$\;
\For{$i=1$ \KwTo $s$}{
    Every free man having a free neighbor in $H$ chooses one such neighbor
    uniformly at random and sends her a proposal\;
    Every free woman receiving at least one proposal chooses one proposal
    arbitrarily\;
    Add all chosen proposal edges to $N$ and mark their endpoints nonfree\;
}
$Z_H\leftarrow
\{(m,w)\in E_H:m\text{ and }w\text{ are not incident to }N\}$\;
\Return{$(N,Z_H)$}\;
\end{algorithm}

\paragraph{Guarded proposal round.}
A guarded proposal round takes as input an active set $A_m$ for each man
$m$, where all edges in $A_m$ belong to a single original quantile of
$m$.
A man with $A_m\neq\varnothing$ is \emph{active}.
Every active unmatched man proposes through every edge in $A_m$ that is
still live.
Each woman receiving proposals accepts all proposals in her best quantile
containing a proposal.

Let $H$ be the graph of accepted proposal edges.
We run $\AMM(H,\rho)$, permanently delete its residual edges, and
incorporate the resulting matching $N$ into the persistent matching $M$.
For each $(m,w)\in N$, the woman $w$ takes $m$ as her new partner,
possibly displacing her old partner.
If
\begin{equation}
\label{eq:freeze-condition}
    d(m)>R\,d(w),
\end{equation}
we call $(m,w)$ a \emph{frozen pair} and delete every other live edge
incident to $w$.
Otherwise, $w$ rejects every other live edge in the same or a worse
quantile than $m$.

Consequently, whenever a woman is matched, every other live edge incident
to her belongs to a strictly better quantile than her current partner.
In particular, every proposal subsequently received by a matched woman
comes from a strictly better quantile.

\begin{algorithm}[H]
\caption{$\GPR$}
\label{alg:guarded-proposal-round}
\KwIn{The current live edges, persistent matching $M$, and active sets
$\{A_m\}$.}
\KwOut{Updated live edges, matching, and active sets.}
Every active unmatched man $m$ proposes through every live edge in $A_m$\;
Every woman receiving at least one proposal accepts all proposals in her
best quantile containing a proposal\;
Let $H$ be the graph of accepted proposal edges\;
$(N,Z_H)\leftarrow\AMM(H,\rho)$\;
Delete every edge in $Z_H$\;
\ForEach{woman $w$ incident to an edge $(m,w)\in N$}{
    If $w$ had an old partner $m_{\mathrm{old}}$, remove the old matching
    edge and set $p(m_{\mathrm{old}})\leftarrow\bot$\;
    Set $p(w)\leftarrow m$ and $p(m)\leftarrow w$\;
    \eIf{$d(m)>R\,d(w)$}{
        Delete every live edge incident to $w$ except $(m,w)$\;
    }{
        Reject every live edge $(m',w)\neq(m,w)$ satisfying
        $q_w(m')\geq q_w(m)$\;
    }
}
Every man incident to $N$ sets $A_m\leftarrow\varnothing$\;
\end{algorithm}

\paragraph{Guarded quantile match.}
At the beginning of a guarded quantile match, every unmatched man having
a live incident edge selects his first original quantile containing a
live edge.
His active set $A_m$ is initialized to the live edges in this quantile.
The selected quantile does not change during the guarded quantile match;
edges in $A_m$ may subsequently be deleted.

We say that a man is \emph{displaced} if his partner takes a new partner
during a proposal round.
Whenever a man becomes matched, his active set is set to
$\varnothing$.
Hence, if he is later displaced during the same guarded quantile match,
he remains inactive.
Active sets are reinitialized only at the beginning of the next guarded
quantile match.

We execute $k$ guarded proposal rounds.

\begin{algorithm}[H]
\caption{$\GQM$}
\label{alg:guarded-quantile-match}
\KwIn{The current live edges and persistent matching.}
\KwOut{The updated persistent matching and edge states.}
Set $A_m\leftarrow\varnothing$ for every man $m$\;
\ForEach{unmatched man $m$ having at least one live incident edge}{
    Let $i$ be the smallest index such that $Q_i^m$ contains a live edge\;
    Set $A_m$ to the set of live edges from $m$ to $Q_i^m$\;
}
\For{$j=1$ \KwTo $k$}{
    Execute $\GPR$\;
}
\Return{the current persistent matching}\;
\end{algorithm}

Using the shared random bits, all vertices choose the same uniformly
random index $J\in\{1,\ldots,L\}$.
The algorithm performs all $L$ iterations of $\GQM$ and outputs the persistent
matching after iteration $J$.
Operationally, at the end of iteration $J$, every vertex stores its
current partner.
The choice of $J$ does not affect the execution of the proposal process.

\begin{algorithm}[H]
\caption{$\DGASM(G,\eps)$}
\label{alg:degree-guarded-asm}
\KwIn{A stable marriage instance $G=(Y\cup X,E)$ and
$0<\eps\leq 1/2$.}
\KwOut{A matching $M_J$.}

Initialize every edge as live and every vertex as unmatched\;

Using shared randomness, choose $J$ uniformly from
$\{1,\ldots,L\}$\;

\For{$t=1$ \KwTo $L$}{
    $M_t\leftarrow\GQM$\;
}
\Return{$M_J$}\;
\end{algorithm}

\subsection{Analysis of the Algorithm}
\label{subsec:analysis}

We first analyze the almost-maximal matching subroutine.

\begin{lemma}[Almost-maximal matching]
\label{lem:amm}
$\AMM(H,\rho)$ returns a matching $N$ such that
\(
    \mathbb{E}[|Z_H|]\leq \rho |E(H)|\).
\end{lemma}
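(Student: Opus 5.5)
The plan is to show that one iteration of the loop in \Cref{alg:edge-amm} reduces the expected number of residual edges by a factor of at least $2$, and then iterate. Let $F_i$ be the set of edges of $H$ that are still residual (both endpoints free) at the start of iteration $i$, so $F_1=E(H)$. Since vertices never become free again, $F_{i+1}\subseteq F_i$ and $Z_H=F_{s+1}$. If we prove $\E\bigl[|F_{i+1}|\bigm| F_i\bigr]\le \tfrac12|F_i|$, then the tower rule gives $\E[|Z_H|]\le 2^{-s}|E(H)|\le\rho|E(H)|$, because $s=\lceil\log_2(1/\rho)\rceil$. That $N$ is a matching is immediate: each free man sends at most one proposal, each free woman accepts at most one, and matched vertices are marked nonfree.

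For the one-step bound, fix the state at the start of an iteration. For a free man $m$, let $f_m$ be his number of free neighbours, and let $m_1,\dots,m_{d}$ be the free neighbours of a free woman $w$. An edge $(m,w)\in F_i$ survives only if two conditions hold.
\begin{itemize}
\item[(a)] $w$ receives no proposal. This has probability $\prod_j(1-1/f_{m_j})$, since the men choose independently.
\item[(b)] $m$ stays free. This requires that $m$ chose some $w^*\neq w$ and that $w^*$ accepted another man.
\end{itemize}
Using only (a), and bounding the product by the average of its factors, gives
\[
\E\bigl[|F_{i+1}|\bigr]\le\sum_w d_w\prod_j\Bigl(1-\frac1{f_{m_j}}\Bigr)\le\sum_w\sum_j\Bigl(1-\frac1{f_{m_j}}\Bigr)=\sum_m (f_m-1).
\]
This only deletes one edge per man, so (a) alone is too weak.

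To get a constant fraction, I would combine (a) with (b) through a charging argument on the women chosen by the men. Every woman $w^*$ who receives a proposal becomes matched, so all of her residual edges disappear. Hence the deleted edges include $\sum_{w^*\text{ proposed to}} d_{w^*}$. Each man $m$ samples $w^*$ uniformly from his $f_m$ free neighbours, so his choice hits, in expectation, a woman of average free degree $\frac1{f_m}\sum_{w\in N(m)}d_w$.

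The plan is to split the edges into two classes. For an edge $(m,w)$ with $d_w\ge f_m$, charge it to $w$ being hit: the probability that $w$ is hit is at least $1-\prod_j(1-1/f_{m_j})$, and I would try to show this is $\ge 1/2$ on average over such edges. For an edge with $d_w<f_m$, charge it to $m$ being matched. Here $m$'s chosen woman has few competitors, so she accepts $m$ with probability at least about $1/d_{w^*}$, and each matched man deletes $f_m$ edges.

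The main obstacle is precisely this balancing between the two cases. The acceptance rule is arbitrary, which is potentially adversarial, and the events (a) and (b) are correlated. I expect to need either a negative-correlation or FKG-type step, or a cleaner potential such as $\sum_m f_m$ over free men together with a pairing between each man's proposal and the edges it kills. If an exact factor of $\tfrac12$ turns out not to hold, the fallback is to prove some absolute constant factor $c<1$. The iteration count would then need to be $O(\log(1/\rho))$, which still suffices for the round bounds claimed later.
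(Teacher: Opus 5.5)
Your outer frame is right and matches the paper: $F_{i+1}\subseteq F_i$, $Z_H=F_{s+1}$, and a one-step bound $\E[|F_{i+1}|\mid F_i]\le\frac12|F_i|$ iterated by the tower rule. But that one-step bound is the whole content of the lemma, and your proposal does not prove it. The two-class charging you sketch fails in both classes.

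\emph{Edges with $d_w\ge f_m$.} The hope that $w$ is hit with probability at least $1/2$ on average over such edges is false in general. Suppose $w$'s only such edge comes from a man with $f_m=d_w$, and her other $d_w-1$ neighbours have free degree $d_w^2$. Then she is hit with probability about $2/d_w$. It is easy to build a graph in which every edge of this class looks like that: take $D$ light men with $D$ private women each, plus $D-1$ heavy men adjacent to all $D^2$ women.

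\emph{Edges with $d_w<f_m$.} The claim that $m$'s chosen woman $w^*$ accepts him with probability about $1/d_{w^*}$ has no basis. Acceptance among competing proposals is arbitrary, so it may always go against $m$. Moreover, the condition $d_w<f_m$ concerns $w$, not the random $w^*$.

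\emph{The fallback.} An unspecified contraction factor $c<1$ would not prove the lemma as stated, because with $s=\lceil\log_2(1/\rho)\rceil$ iterations you need $c\le 1/2$.

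The missing idea is to do the accounting per man and to condition on the choices of all \emph{other} men. This removes both the correlation and the adversarial tie-breaking, with no FKG-type argument.

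Fix a free man $m$ with $f=f_m\ge 1$ free neighbours, and fix every other man's choice. Let $S$ be the set of $m$'s free neighbours that receive a proposal from some other man, and let $x=|S|$.
\begin{itemize}
\item Every woman in $S$ receives a proposal and hence becomes matched, whatever $m$ does and however ties are broken. So at most $f-x$ edges at $m$ can survive.
\item With probability $(f-x)/f$, $m$ picks a woman outside $S$. He is then her only proposer, so she must accept him, and none of his edges survive.
\end{itemize}
Thus, for every fixing of the other choices, the number $R_m$ of residual edges at $m$ after the step satisfies
\[
\E[R_m]\le\frac{x}{f}\,(f-x)\le\frac{f}{4}\le\frac{f}{2}.
\]
Every residual edge has exactly one male endpoint, so $|F_{i+1}|=\sum_m R_m$ and $|F_i|=\sum_m f_m$. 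Summing the bound over men gives the halving step with the exact constant $1/2$. This is the paper's argument, which uses a case split on $x\ge f/2$ in place of the single bound above.
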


\begin{proof}
Let $E_i$ denote the set of residual edges after iteration $i$ of
the algorithm, with $E_0=E(H)$.
We first show that
\begin{equation}
\label{eq:amm-shrinkage}
    \mathbb{E}[|E_i|\mid E_{i-1}]
    \leq \frac12 |E_{i-1}|.
\end{equation}

Fix the state at the beginning of iteration $i$, and let $H'$ be the subgraph
induced by the currently free vertices.
For each free man $m$, let $R_m$ be the number of residual edges
incident to $m$ after iteration $i$.
We show that
\[
    \mathbb{E}[R_m]\leq \frac{d_{H'}(m)}{2}.
\]

Fix a free man $m$ of positive degree $d=d_{H'}(m)$.
Fix the random choices of all other free men, leaving only the random
choice of $m$ unexposed.
Let $S$ be the set of neighbors of $m$ in $H'$ that receive a proposal
from at least one other man, and let $x=|S|$.

Every woman in $S$ becomes matched in iteration $i$.
Hence, any residual edge incident to $m$ must have its other endpoint
outside $S$. Therefore, if $x\geq d/2$, then we already have
\[
    R_m\leq d-x\leq \frac d2.
\]

Suppose instead that $x<d/2$.
The man $m$ chooses a woman outside $S$ with probability
\[
    \frac{d-x}{d}>\frac12.
\]
In this case, the chosen woman receives no other proposal, so $m$ becomes
matched and $R_m=0$.
With the remaining probability $x/d$, we use the trivial bound
$R_m\leq d$.
Therefore,
\[
    \mathbb{E}[R_m]
    \leq \frac{x}{d}\cdot d
    =x
    <\frac d2.
\]
Since this bound holds for every fixing of the random choices of the
other men, it also holds after averaging over those choices.

Every residual edge is incident to exactly one man.
Thus,
\[
    \mathbb{E}[|E_i|\mid E_{i-1}]
    =
    \sum_m \mathbb{E}[R_m]
    \leq
    \frac12\sum_m d_{H'}(m)
    =
    \frac12|E_{i-1}|,
\]
which proves \eqref{eq:amm-shrinkage}.

By iterated expectation,
\[
    \mathbb{E}[|E_s|]
    \leq 2^{-s}|E(H)|
    \leq \rho |E(H)|,
\]
where $s=\lceil\log_2(1/\rho)\rceil$.
\end{proof}

The next lemma is the guarded analogue of the quantile match guarantee
of Ostrovsky and Rosenbaum~\cite{ostrovsky2015fast}.
It also shows that a man matched during a guarded quantile match cannot
be displaced before the guarded quantile match ends.

\begin{lemma}[Guarded quantile match guarantee]
\label{lem:quantile-guarantee}
The following properties hold during an execution of $\GQM$.
\begin{itemize}
    \item A man who becomes matched during the execution remains matched
    until the execution ends.

    \item Every man who is active at the beginning of the execution is,
    at the end, either matched or has no live edge remaining in the
    quantile he selected at the beginning of the execution.
\end{itemize}
\end{lemma}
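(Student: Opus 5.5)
We prove the two properties in order, relying on one invariant. Whenever a woman $w$ is matched to $m$, every other live edge $(m',w)$ satisfies $q_w(m')<q_w(m)$. This is the ``Consequently'' remark after the description of $\GPR$, and we record it as a claim proved by induction over proposal rounds. The two branches of the update handle the new match. The freeze branch leaves no other live edge at $w$. The rejection branch deletes every live edge in quantile $q_w(m)$ or worse. Later operations only delete edges, so the invariant persists until $w$ changes partner.

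\emph{First property.} Suppose $m$ becomes matched to $w$ in some round of the current $\GQM$, and $w$ later takes a new partner $m'$ in a later round of the same execution. Then $m'$ proposed to $w$ through a live edge in that round. By the invariant, $q_w(m')<q_w(m)$, so this is consistent with the invariant and gives no contradiction by itself. So the contradiction must come from the activity of $m'$. The edge $(m',w)$ lies in $A_{m'}$, which is a subset of the quantile $Q_i^{m'}$ that $m'$ selected at the start of the execution. I would therefore look for a different argument. The natural route mirrors the Ostrovsky--Rosenbaum reasoning: track, for each woman, the best quantile in which she receives a proposal. In round $j$ this quantile strictly worsens unless $w$ gets matched within it, because the other accepted men were matched elsewhere by $\AMM$ or had their edges deleted as residual. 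If the second property fails on this route, we would instead count, for each woman, her quantile index of best live proposal and use that $w$ has only $k$ quantiles.

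The main obstacle is the first property. It seems to require showing that a woman matched during this execution receives no later proposal from an active man. This is plausible because after $w$ accepts from her best quantile $q$, every man in a better quantile of hers who is still active would already have proposed to her in this round or an earlier one, since active sets are fixed at the start. A proposal she did not accept, or an edge she rejected, is either deleted or belongs to a man who became matched and hence inactive. I would formalize this as follows. In round $j$, let $q^\ast$ be the best quantile of $w$ containing a proposal. Every man $m'$ in a better quantile of $w$ who still has a live edge to $w$ in his active set proposed in an earlier round. The reason is that active sets only shrink, and an active unmatched man proposes through every live edge in $A_{m'}$ in every round. After that earlier round, $(m',w)$ was either deleted as residual, or $m'$ was matched and so became inactive and stays inactive, or it was accepted by $w$ but not chosen in $N$. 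In the last case $m'$ must have been matched elsewhere, or the edge was residual, by the definition of $Z_H$. Consequently, after $w$ is matched at quantile $q^\ast$, no future active proposal to $w$ can come from a better quantile, and the invariant forbids worse or equal ones. So $w$ never changes partner, which yields the first property.

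\emph{Second property.} Now consider a man active at the start of the execution who is still unmatched at the end. By the first property, he was never matched during the execution; a man matched and then displaced would contradict it. So he was active and unmatched in all $k$ rounds, and he proposed through every live edge of $A_m$ each round. For each surviving edge $(m,w)$, track $w$'s best proposal quantile. By the argument above, this quantile strictly worsens after each round in which $w$ is not matched from it. After at most $k$ rounds, the edge must therefore have been deleted, either as residual, by rejection, or by freezing. Hence no live edge remains in the selected quantile.
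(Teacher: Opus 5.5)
Your proof is correct and follows the paper's argument. The first property is the paper's Fact~2: once $w$ is matched from her best proposing quantile $q^\ast$, no proposal from a strictly better quantile can appear later, and edges in the same or a worse quantile are already gone. The second property is Fact~1 plus counting: $w$'s accepted quantile index strictly increases and stays at most $q_w(m)\le k$.

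One step needs tidying. In the first property, your case split on what happened ``after that earlier round'' leaves out the case where $w$ accepted an even better quantile and left $(m',w)$ live with $m'$ still active. You do not need that detour. Within one guarded quantile match, any man who proposes to $w$ after round $j$ was already unmatched and active at the start of round $j$, with a live edge to $w$ in his active set. So he proposed to $w$ in round $j$ itself, which places him in quantile $q^\ast$ or worse.
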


\begin{proof}
We first establish two facts about the proposals received by a woman $w$
during a guarded quantile match.

\paragraph{Fact 1.}
Suppose during some call to
$\GPR$, $w$ accepts proposals from quantile $i$ but does not obtain a new partner in that call.
Then, until $w$ obtains a new partner, she can only accept proposals from
quantiles with index strictly larger than $i$.

Indeed, since $i$ is the best quantile containing a proposal in that
call, $w$ receives no proposal from a quantile with index smaller than
$i$.
As active sets only shrink and live edges are only deleted during
$\GQM$, no such proposal can appear in a later call to $\GPR$.

Moreover, $w$ accepts every proposal from quantile $i$.
For each accepted edge $(m,w)$, either $m$ is incident to $N$ through
another woman, in which case $m$ becomes inactive, or neither endpoint
of $(m,w)$ is incident to $N$, in which case $(m,w)$ is residual and is
deleted.
Thus, no active man can propose to $w$ again from quantile $i$.
The claim follows.

\paragraph{Fact 2.}
Once $w$ obtains a new partner during $\GQM$, she receives no further
proposals during the same guarded quantile match.

Suppose that her new partner $m$ belongs to quantile $i$.
Since $i$ is the best quantile containing a proposal in that call to
$\GPR$, there is no proposal from a quantile with index smaller than
$i$.
As active sets only shrink and live edges are only deleted, no such
proposal can appear later.
If $(m,w)$ is frozen, all other live edges incident to $w$ are deleted.
Otherwise, all live edges in quantile $i$ or a worse quantile are
deleted.
Hence, no active man can propose to $w$ in a later call to $\GPR$ during
the same guarded quantile match.

\paragraph{Property 1.}
The first property follows immediately from Fact~2.
Once a man becomes matched, his partner receives no further proposal
during the current guarded quantile match and therefore cannot displace
him.

\paragraph{Property 2.}
Let man $m$ be active at the beginning of the execution and suppose that $m$
is unmatched at the end.
By the first property, $m$ was never matched during the execution, and
hence remains active throughout.

Suppose, for a contradiction, that some edge $(m,w)$ in the quantile
selected by $m$ at the beginning of the execution remains live at the
end.
Then $(m,w)$ belongs to $A_m$ throughout the execution, so $m$ proposes
to $w$ in each of the $k$ calls to $\GPR$.

We first claim that $w$ cannot obtain a new partner in any of these
calls.
Suppose otherwise, and let $i$ be the quantile containing her new
partner.
Since $m$ also proposes to $w$ in that call and $w$ accepts proposals
only from her best quantile containing a proposal,
\[
    i\leq q_w(m).
\]
If $m$ is incident to $N$, then $m$ becomes matched.
Otherwise, since $w$ obtains a new partner, $w$ is incident to $N$, and
the update rule deletes $(m,w)$: either the new pair is frozen, or $w$
rejects all live edges in the same or a worse quantile.
Both possibilities contradict the assumptions that $m$ remains unmatched
and $(m,w)$ remains live.

Thus, $w$ does not obtain a new partner during any of the $k$ calls to
$\GPR$.
Since $m$ proposes to $w$ in every call, $w$ accepts proposals in every
call, always from a quantile with index at most $q_w(m)$.
By Fact~1, these indices strictly increase from one call to the next.
There are at most $q_w(m)-1\leq k-1$ quantiles strictly better than
the quantile of $w$ containing $m$.
Therefore, in some call to $\GPR$, $w$ must accept proposals from the
quantile containing $m$.

In that call, $(m,w)$ belongs to the accepted proposal graph $H$.
If $m$ is incident to $N$, then $m$ becomes matched.
If $w$ is incident to $N$ but $m$ is not, then $(m,w)$ is deleted when
$w$ takes her new partner.
If neither endpoint is incident to $N$, then $(m,w)$ is residual and is
deleted.
Each case contradicts the assumptions that $m$ is unmatched and
$(m,w)$ remains live at the end.

Thus, no live edge remains in the quantile selected by $m$ at the
beginning of the execution.
\end{proof}

We next bound the total number of original edges incident to women that
ever form a frozen pair during the execution of $\DGASM$.
These edges will be accounted for pessimistically when bounding the
number of blocking pairs.
Let $X_F$ be the set of such women, and let
\begin{equation*}
    F=\{(m,w)\in E:w\in X_F\}.
\end{equation*}

\begin{lemma}[Edges incident to frozen women]
\label{lem:frozen-edges}
\(|F|\leq \frac{|E|}{R}
\).
\end{lemma}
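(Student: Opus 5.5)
Note that $|F|=\sum_{w\in X_F} d(w)$, since each edge of $F$ is determined by its female endpoint in $X_F$. The plan is to charge each frozen woman's degree to her frozen partner. The key structural claim is that every $w\in X_F$ is assigned a single man $\phi(w)$ such that $(\phi(w),w)$ satisfies the freeze condition \eqref{eq:freeze-condition}, i.e.\ $d(w)<d(\phi(w))/R$, and that $\phi$ is injective. Given this,
\[
|F|=\sum_{w\in X_F}d(w)<\frac1R\sum_{w\in X_F}d(\phi(w))\le \frac1R\sum_{m\in Y}d(m)=\frac{|E|}{R},
\]
where the last equality uses that every edge has exactly one endpoint in $Y$.

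\textbf{Defining $\phi$.} Let $\phi(w)$ be the man with whom $w$ first forms a frozen pair. I would then show that this pair is permanent.

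\textbf{Step 1: $w$ never receives another proposal.} When $(m,w)$ is frozen, $\GPR$ deletes every live edge incident to $w$ except $(m,w)$. Deletion is permanent and proposals travel only along live edges, so afterwards $w$ can receive proposals only from $m$.

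\textbf{Step 2: $m$ never proposes to $w$ again.} At the moment the pair forms, $m$ is matched to $w$, so his active set is emptied. Only unmatched men propose. Thus $m$ could propose to $w$ again only after becoming unmatched.

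\textbf{Step 3: $m$ is never displaced.} A man loses his partner only when that partner takes a new partner in some $N$, which requires her to have accepted a proposal along a live edge. By Step 1, the only live edge at $w$ is $(m,w)$, and by Step 2, $m$ does not propose while matched to her. Hence $w$ never obtains a new partner, and $(m,w)$ remains in $M$ for the rest of the execution.

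\textbf{Injectivity of $\phi$.} By Step 3, each $w\in X_F$ stays matched to $\phi(w)$ from her freezing time onward. Suppose $\phi(w)=\phi(w')=m$ with $w\ne w'$, and let $w$ freeze first. Then $m$ stays matched to $w$ forever and never proposes again, so he cannot later join $N$ with $w'$. This contradicts $\phi(w')=m$. The same argument shows that each frozen woman forms exactly one frozen pair, so $\phi$ is well defined. Hence the frozen pairs form a matching and $\phi$ is injective.

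\textbf{Main obstacle.} The delicate point is the permanence argument in Steps 1--3: we must rule out the frozen man later re-entering $N$ with another woman, and rule out any edge deletion or rejection that disturbs the frozen pair. Both follow from the rule that only unmatched men propose and that matched men have empty active sets. The rest is the summation displayed above.
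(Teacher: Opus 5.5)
Your proof is correct and matches the paper's argument: the paper also bounds $d(w)<d(\phi(w))/R$ for each frozen woman, uses that the frozen partners are distinct because frozen pairs are permanent, and sums over men to get $|E|/R$. Your Steps 1--3 spell out the permanence that the paper asserts in one line, and the one case your ``let $w$ freeze first'' phrasing skips (two women freezing with the same man in the same round) cannot occur because $N$ is a matching.
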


\begin{proof}
For each $w\in X_F$, let $f(w)$ denote her partner in the frozen pair.
Since frozen pairs are permanent, the men $f(w)$ are distinct.
By the freezing condition,
\(
    d(w)<\frac{d(f(w))}{R}\), so 
\[
    |F|
    =\sum_{w\in X_F}d(w)
    <\frac1R\sum_{w\in X_F}d(f(w))
    \leq \frac1R\sum_{m\in Y}d(m)
    =\frac{|E|}{R}. \qedhere
\]
\end{proof}

Let $Z$ be the union of all residual edge sets deleted by calls to
$\AMM$ in the execution of $\DGASM$.

\begin{lemma}[Residual edge bound]
\label{lem:residual-edges} $\E[|Z|]\leq\frac{\eps}{4}|E|$.
\end{lemma}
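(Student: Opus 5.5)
The bound will follow by summing the per-call guarantee of \Cref{lem:amm} over all calls to $\AMM$ made during $\DGASM$, using linearity of expectation together with conditional expectations.

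\textbf{Counting calls.} Each execution of $\GQM$ performs exactly $k$ calls to $\GPR$. Each call to $\GPR$ makes exactly one call to $\AMM$. Since $\DGASM$ runs $L$ iterations of $\GQM$, the total number of calls to $\AMM$ is exactly $kL$. This number is deterministic and does not depend on the random choices or on $J$.

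\textbf{Bounding one call.} Index the calls by $c=1,\ldots,kL$. Let $H_c$ be the accepted-proposal graph of call $c$ and $Z_{H_c}$ its residual set. The graph $H_c$ is determined by the history $\mathcal{F}_{c-1}$ of all random choices before call $c$. Conditioned on that history, \Cref{lem:amm} applies to $H_c$, since the randomness inside call $c$ is fresh. This gives
\[
\E\bigl[|Z_{H_c}|\mid \mathcal{F}_{c-1}\bigr]\le \rho|E(H_c)|.
\]
Every edge of $H_c$ is a live edge of $G$, so $|E(H_c)|\le |E|$. Taking expectations yields $\E[|Z_{H_c}|]\le \rho|E|$.

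\textbf{Summing.} Since $Z=\bigcup_c Z_{H_c}$, we have $|Z|\le\sum_c |Z_{H_c}|$. The sets are in fact disjoint, because deleted edges never reappear, but the union bound is all we need. Therefore
\[
\E[|Z|]\le kL\rho|E|=kL\cdot\frac{\eps}{4kL}|E|=\frac{\eps}{4}|E|,
\]
which is exactly the claimed bound given the choice of $\rho$ in \eqref{eq:parameters}.

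The only subtle point is that $H_c$ is itself random, so \Cref{lem:amm} cannot be applied to it directly. This is handled by conditioning on the history and then applying the tower property. The rest is bookkeeping.
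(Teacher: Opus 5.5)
Your proof is correct and follows essentially the same argument as the paper. Like the paper, you count exactly $kL$ calls to $\AMM$, bound each call's expected residual set by $\rho|E|$ using \Cref{lem:amm} together with $|E(H_c)|\le|E|$, and sum by linearity of expectation. Your explicit conditioning on the history before each call, followed by the tower property, is a step the paper leaves implicit.
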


\begin{proof}
There are exactly $kL$ calls to $\GPR$, and every accepted proposal
graph has at most $|E|$ edges.
By \Cref{lem:amm}, each call deletes at most $\rho|E|$ residual edges
in expectation.
Therefore, by linearity of expectation,
\[
    \E[|Z|]
    \leq \rho kL|E|
    =\frac{\eps}{4}|E|.
    \qedhere
\]
\end{proof}

We next bound the total degree of unmatched men that still have live
edges.
For $t\in\{1,\ldots,L\}$, recall that $M_t$ is the persistent matching after
iteration $t$ in the execution of $\DGASM$. Define
\[
    B_t=
    \left\{
        m\in Y:
        p_{M_t}(m)=\bot
        \text{ and }m\text{ has a live incident edge after iteration }t
    \right\},
    \quad
    P_t=\sum_{m\in B_t}d(m).
\]

\begin{lemma}[Degree bound for unmatched men]
\label{lem:bad-degree-charging} $\sum_{t=1}^L P_t\leq k(R+1)|E|$.
\end{lemma}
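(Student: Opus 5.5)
We charge each contribution $d(m)$ to $P_t$, for $m\in B_t$, to a one-time event: either the exhaustion of one of $m$'s original quantiles, or a partner change of some woman. We then show that each event absorbs only a bounded total charge. Fix $t$ and $m\in B_t$. By \Cref{lem:quantile-guarantee}, a man who becomes matched during iteration $t$ stays matched until it ends, so $m$ was not matched at any point where he could finish iteration $t$ matched. We split on the state of $m$ at the beginning of iteration $t$.

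\emph{Case 1: $m$ was unmatched at the start of iteration $t$.} Since $m$ has a live edge after iteration $t$, he had one at its start, so he was active with some selected quantile $Q_i^m$. By the second property of \Cref{lem:quantile-guarantee}, $m$ is unmatched at the end, so $Q_i^m$ has no live edge left. Because $m$ still has a live edge, it lies in some quantile $Q_{i'}^m$ with $i'>i$. Hence $Q_i^m$ is not the last nonempty quantile, so $|Q_i^m|=s_m\ge d(m)/k$. Charge $d(m)\le k|Q_i^m|$ to the exhaustion of $Q_i^m$. Deletion is permanent, and a quantile with no live edges is never selected again. So each quantile becomes exhausted in at most one iteration, in the sense that it had a live edge at the start of that iteration and none at its end. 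The total Case 1 charge over all $t$ is therefore at most $k\sum_m\sum_i|Q_i^m|=k|E|$.

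\emph{Case 2: $m$ was matched at the start of iteration $t$.} He is unmatched at the end, and men only lose partners through displacement, so some woman $w$ took a new partner $m'$ during iteration $t$, displacing $m$. That pair $(m',w)$ cannot be frozen. I would also verify that $(m,w)$ itself was not frozen: a frozen pair leaves $w$ with no other live edges, so $w$ never again receives a proposal and never displaces $m$. Since $(m,w)$ was not frozen when formed, $d(m)\le R\,d(w)$. Charge $d(m)$ to this partner change of $w$.

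Each partner change of $w$ displaces exactly one man, so it receives at most one charge, of size at most $R\,d(w)$. It remains to bound the number of times $w$ changes partners while already matched. After a non-frozen match with a partner in quantile $q$, all remaining live edges at $w$ lie in quantiles strictly better than $q$, and she accepts only proposals through live edges. Her partner's quantile index therefore strictly decreases with each change, giving at most $k-1$ displacements per woman. If the last new pair is frozen, no further changes occur, so the count is unaffected. The total Case 2 charge is at most $\sum_w (k-1)R\,d(w)\le kR|E|$.

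Summing the two cases gives $\sum_t P_t\le k|E|+kR|E|=k(R+1)|E|$. The step needing the most care is the case split. We must confirm that a man in $B_t$ who was matched at the start of iteration $t$ was displaced exactly within iteration $t$, so that each displacement event is charged in only one $P_t$. We must also confirm that frozen pairs can never be displaced, which is what makes the bound $d(m)\le R\,d(w)$ valid.
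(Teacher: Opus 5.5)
Your proof is correct and follows the paper's argument essentially step for step. You charge Case 1 contributions to the exhausted selected quantile, for a total of at most $k|E|$. You charge Case 2 contributions to the displacing woman's partner change, using $d(m)\le R\,d(w)$ from the non-frozen old pair $(m,w)$, for a total of at most $kR|E|$. One small slip: your remark that the new pair $(m',w)$ cannot be frozen is false in general, and you later correctly allow it. It is never used, though, since only the non-frozenness of the old pair $(m,w)$ matters.
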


\begin{proof}
Fix an iteration $t$ and a man $m\in B_t$.

First suppose that $m$ is unmatched at the beginning of iteration $t$.
By \Cref{lem:quantile-guarantee}, the quantile selected by $m$ at the
beginning of the iteration has no live edge remaining at the end.
Since $m\in B_t$, he still has a live edge in a later quantile, so the
selected quantile is not his final original quantile and therefore has
size $s_m=\lceil d(m)/k\rceil$.
Hence $d(m)\leq k s_m$.
We charge this contribution of $d(m)$ to the selected quantile.
Each original male quantile is charged at most once, and the male
quantiles partition $E$.
Thus, the total charge of this type over all iterations is at most
$k|E|$.

Now suppose that $m$ is matched at the beginning of iteration $t$.
Since $m$ is unmatched at the end, some woman $w$ displaces him during
this iteration.
The pair $(m,w)$ was not frozen, since frozen pairs are permanent.
Hence $d(m)\leq R d(w)$.
We charge this contribution of $d(m)$ to the displacement by $w$.

Each time a woman displaces her partner, she moves to a strictly better
quantile.
Thus, each woman $w$ causes less than $k$ displacements, and the
total charge of this type is less than
\[
    R\sum_{w\in X} k d(w)=kR|E|.
\]

Combining the two types of charges gives
\[
    \sum_{t=1}^L P_t
    < k|E|+kR|E|
    = k(R+1)|E|.
    \qedhere
\]
\end{proof}

Since the output iteration $J$ is chosen uniformly at random, the preceding
lemma implies that its expected unmatched degree is small.

\begin{lemma}[Degree bound for unmatched men at the output iteration]
\label{lem:random-bad-mass} $\E[P_J]\leq \frac{\eps}{4}|E|$.
\end{lemma}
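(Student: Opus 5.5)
The plan is to condition on the entire execution of the proposal process, use the independence of $J$ from that execution, and then apply \Cref{lem:bad-degree-charging} together with the choice of $L$ in \eqref{eq:parameters}.

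\textbf{Independence of $J$.} The index $J$ is drawn from the shared random bits. These bits are not used anywhere else in the algorithm. As noted after the description of $\GQM$, the choice of $J$ does not affect the execution of the proposal process. Hence $J$ is independent of the sequence $(P_1,\ldots,P_L)$, which is determined by the private randomness used inside the calls to $\AMM$. We also need $J$ to be exactly uniform on $\{1,\ldots,L\}$. Since $L$ is a power of two, $\log_2 L=O(\log(1/\eps))$ shared bits give such a $J$.

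\textbf{Conditional expectation.} Condition on any fixed outcome of the private randomness. By independence and uniformity,
\[
\E[P_J\mid P_1,\ldots,P_L]=\frac1L\sum_{t=1}^L P_t .
\]
\Cref{lem:bad-degree-charging} holds deterministically for every execution, so the right-hand side is at most $k(R+1)|E|/L$.

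\textbf{Using the choice of $L$.} By \eqref{eq:parameters},
\[
L\geq \frac{4k(R+1)}{\eps},
\]
so the bound above is at most $\frac{\eps}{4}|E|$. Taking expectation over the private randomness gives $\E[P_J]\leq\frac{\eps}{4}|E|$.

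\textbf{Main obstacle.} There is no real difficulty here. The only point needing care is the independence of $J$ from the execution. This holds because the shared bits are used solely to select the output index, and the vertices run all $L$ iterations regardless of the value of $J$.
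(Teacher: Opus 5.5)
Your proof is correct and follows essentially the same argument as the paper. You use that $J$ is uniform and independent of the proposal process to reduce $\E[P_J]$ to the average $\frac{1}{L}\sum_{t}P_t$, then apply the deterministic bound of \Cref{lem:bad-degree-charging} together with $L\geq 4k(R+1)/\eps$; the only difference is that you condition on the private randomness first, whereas the paper applies linearity of expectation directly.
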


\begin{proof}
Since $J$ is uniform over $\{1,\ldots,L\}$ and independent of the
proposal process,
\[
    \E[P_J]
    =
    \frac{1}{L}\sum_{t=1}^{L}\E[P_t]
    =
    \frac{1}{L}\E\left[\sum_{t=1}^{L}P_t\right].
\]
By \Cref{lem:bad-degree-charging} and the definition of $L$,
\[
    \E[P_J]
    \leq
    \frac{k(R+1)}{L}|E|
    \leq
    \frac{\eps}{4}|E|.
    \qedhere
\]
\end{proof}

We now translate the preceding bounds into a bound on the number of
blocking pairs.
Following Ostrovsky and Rosenbaum~\cite{ostrovsky2015fast}, we distinguish
blocking pairs according to how much both endpoints improve.
For a matching $M$, define
\[
    \Delta_v(u;M)=P_v(p_M(v))-P_v(u).
\]
A blocking pair $(m,w)$ is \emph{far} if
\[
    \Delta_m(w;M)>\frac{d(m)}{k}
    \qquad\text{and}\qquad
    \Delta_w(m;M)>\frac{d(w)}{k};
\]
otherwise, it is \emph{near}.

The following is a direct adaptation of the counting argument in
\cite[Lemma~4]{ostrovsky2015fast}.

\begin{lemma}[Near blocking pairs]
\label{lem:near-blocking}
Every matching $M$ has at most $2|E|/k$ near blocking pairs.
\end{lemma}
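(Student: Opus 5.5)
A near blocking pair $(m,w)$ satisfies at least one of $\Delta_m(w;M)\le d(m)/k$ or $\Delta_w(m;M)\le d(w)/k$. I will charge each near blocking pair to an endpoint at which this inequality holds, the \emph{near endpoint}, choosing arbitrarily if both qualify. It then suffices to bound, for each vertex $v$, the number of blocking pairs $(v,u)$ in which $v$ is the near endpoint, and to sum over $v$.

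Fix a vertex $v$. Since $(v,u)$ is blocking, $v$ strictly prefers $u$ to $p_M(v)$, so $P_v(u)<P_v(p_M(v))$. Being the near endpoint means $P_v(p_M(v))-P_v(u)\le d(v)/k$. The ranks are distinct integers, so the ranks $P_v(u)$ of such $u$ lie in the interval $[P_v(p_M(v))-d(v)/k,\;P_v(p_M(v))-1]$. This interval contains at most $\lfloor d(v)/k\rfloor\le d(v)/k$ integers. If $v$ is unmatched, then $P_v(\bot)=d(v)+1$ and the same count applies. Hence $v$ is the near endpoint of at most $d(v)/k$ near blocking pairs.

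Summing over all vertices, the number of near blocking pairs is at most
\[
\sum_{v\in Y\cup X}\frac{d(v)}{k}=\frac{1}{k}\Bigl(\sum_{m\in Y}d(m)+\sum_{w\in X}d(w)\Bigr)=\frac{2|E|}{k}.
\]
Each near blocking pair is counted at least once because it has at least one near endpoint. The only point requiring care is the integrality step with the convention $P_v(\bot)=d(v)+1$; there is no substantial obstacle.
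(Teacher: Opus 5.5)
Your proposal is correct and follows the paper's proof: for each vertex $v$, at most $d(v)/k$ neighbors $u$ satisfy $0<\Delta_v(u;M)\le d(v)/k$, and every near blocking pair has such an endpoint, so summing over both sides gives $2|E|/k$. Your explicit justification that ranks are distinct integers, including the unmatched case with $P_v(\bot)=d(v)+1$, just spells out a step the paper leaves implicit.
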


\begin{proof}
For any vertex $v$, at most $d(v)/k$ neighbors $u$ satisfy
$0<\Delta_v(u;M)\leq d(v)/k$.
For every near blocking pair, this inequality holds for at least one of its
endpoints.
Therefore, the number of near blocking pairs is at most
\[
    \sum_{m\in Y}\frac{d(m)}{k}
    +
    \sum_{w\in X}\frac{d(w)}{k}
    =
    \frac{2|E|}{k}.
    \qedhere
\]
\end{proof}

\begin{lemma}[Far blocking pairs]
\label{lem:far-blocking}
For every $t\in\{1,\ldots,L\}$, every far blocking pair of $M_t$ belongs
to $F\cup Z$ or is incident to a man in $B_t$.
Consequently,
\begin{equation}
\label{eq:blocking-decomposition}
    \bp(M_t)
    \leq
    \frac{2|E|}{k}+|F|+|Z|+P_t.
\end{equation}
\end{lemma}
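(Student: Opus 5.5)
Fix $t$ and a far blocking pair $(m,w)$ of $M_t$ that is not in $F\cup Z$ and whose man $m$ is not in $B_t$. We aim for a contradiction. The consequence \eqref{eq:blocking-decomposition} then follows at once. By \Cref{lem:near-blocking} there are at most $2|E|/k$ near pairs. Far pairs lie in $F$, in $Z$, or are incident to $B_t$, and the last kind number at most $\sum_{m\in B_t}d(m)=P_t$.

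First locate the edge $(m,w)$. If it is still live after iteration $t$, then $m$ is not matched, since otherwise $m\in B_t$. Unmatched with a live edge would put $m$ in $B_t$, so $m$ must be matched, to $p(m)\neq w$. Otherwise $(m,w)$ was deleted at some moment. It is not in $Z$. It is also not deleted by freezing, since then $w\in X_F$ and the edge lies in $F$. Hence it was rejected by $w$ when she took a new partner $m'$ with $q_w(m')\le q_w(m)$ in an unfrozen pair.

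\textbf{Case: rejected.} After the rejection $w$ stays matched forever, since women never become unmatched. She only moves to strictly better quantiles, so at time $t$ we have $q_w(p(w))\le q_w(m)$. Two ranks in the same or a better quantile differ by less than $s_w$. We then need $\Delta_w(m;M_t)\le d(w)/k$, or $\Delta_w\le 0$, contradicting farness (or blocking). The mild obstacle is the gap between $s_w=\lceil d(w)/k\rceil$ and $d(w)/k$. Within one quantile, ranks differ by at most $s_w-1<d(w)/k$, which handles equal quantiles. A strictly better quantile gives $\Delta_w<0$, so no blocking.

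\textbf{Case: live and $m$ matched.} The man proposes to quantiles in order, always choosing his first quantile with a live edge, and men never re-propose after matching within the iteration. So $m$ got $p(m)$ from a quantile $q_m(p(m))$ at which all earlier quantiles had no live edges. In particular, because $(m,w)$ is live, $q_m(w)\ge q_m(p(m))$. If $q_m(w)>q_m(p(m))$, then $m$ prefers $p(m)$, so no block. If the quantiles are equal, then $\Delta_m(w;M_t)\le s_m-1<d(m)/k$, so the pair is near.

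The delicate point is justifying that $m$'s partner lies in his first quantile containing a live edge, given that $(m,w)$ is live now. Proposals from an active man go only through $A_m$, which is a subset of the first live quantile at the start of that iteration. Live edges only disappear, so any live edge now in an earlier quantile would also have been live then. That edge would have been selected as the first live quantile, a contradiction. The remaining far pairs are therefore in $F\cup Z$ or touch $B_t$.
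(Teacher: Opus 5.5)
Your proof is correct and is essentially the paper's argument. You split on whether $(m,w)$ is still live rather than on whether $m$ is matched, but the two ingredients are the same as the paper's: a matched man's partner lies in his first live quantile at the start of the iteration in which he was matched, and a rejected edge forces $w$'s later partner into the same or a better quantile. One small slip: ``then $m$ is not matched, since otherwise $m\in B_t$'' should read ``then $m$ is matched'', as your next sentence correctly concludes.
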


\begin{proof}
Let $(m,w)$ be a far blocking pair of $M_t$ with
$(m,w)\notin F\cup Z$.
We prove that $m\in B_t$.

Suppose, for a contradiction, that $m\notin B_t$.
We first claim that $(m,w)$ must have been deleted by the end of
iteration $t$.
There are two cases.

First, suppose that $m$ is matched in $M_t$.
Since $(m,w)$ is far,
\[
    P_m(p_{M_t}(m))-P_m(w)>\frac{d(m)}{k}.
\]
Two neighbors in the same quantile of $m$ differ in rank by at most
\[
    s_m-1<\frac{d(m)}{k}.
\]
Hence, $w$ belongs to a strictly better quantile of $m$ than
$p_{M_t}(m)$.

Consider the iteration in which $m$ obtained his current partner
$p_{M_t}(m)$.
At the beginning of that iteration, $m$ selected his first original
quantile containing a live edge, and his current partner belongs to that
quantile.
Since the quantile containing $w$ is strictly better, it contained no
live edge at that time.
In particular, $(m,w)$ had already been deleted.

Second, suppose that $m$ is unmatched in $M_t$.
Since $m\notin B_t$, by the definition of $B_t$ he has no live incident
edge after iteration $t$.
Thus, $(m,w)$ has again been deleted by the end of iteration $t$.

This proves the claim.
Since $(m,w)\notin Z$, it was not deleted as a residual edge.
Moreover, $(m,w)\notin F$ means that $w$ never forms a frozen pair, so
$(m,w)$ was not deleted by the degree guard.
Therefore, $(m,w)$ must have been rejected by $w$.

When $w$ rejected $m$, she obtained a partner in the same quantile as
$m$ or in a better quantile.
Thereafter, whenever $w$ changes partners, she moves to a strictly better
quantile.
Hence, in $M_t$, either $w$ prefers her partner to $m$, contradicting
that $(m,w)$ is blocking, or $m$ and $p_{M_t}(w)$ belong to the same
quantile.
In the latter case,
\[
    \Delta_w(m;M_t)
    \leq s_w-1
    <\frac{d(w)}{k},
\]
contradicting that $(m,w)$ is far.

Thus, the assumption $m\notin B_t$ leads to a contradiction.
Therefore, every far blocking pair outside $F\cup Z$ is incident to a
man in $B_t$, that is, an unmatched man with at least one live incident
edge after iteration $t$.

By \Cref{lem:near-blocking}, there are at most $2|E|/k$ near blocking
pairs.
There are at most $|F|+|Z|$ far blocking pairs in $F\cup Z$, while the
number of edges incident to men in $B_t$ is at most
\(\sum_{m\in B_t}d(m)=P_t\). 
This proves \eqref{eq:blocking-decomposition}.
\end{proof}

We can now prove the main theorem.

\maincongest*

\begin{proof}
Apply \eqref{eq:blocking-decomposition} to $M_J$ and take expectations.
By
\Cref{lem:frozen-edges,lem:residual-edges,lem:random-bad-mass},
\begin{align*}
    \E[\bp(M_J)]
    &\leq
    \frac{2|E|}{k}
    +\frac{|E|}{R}
    +\E[|Z|]
    +\E[P_J]\\
    &\leq
    \frac{\eps}{4}|E|
    +\frac{\eps}{4}|E|
    +\frac{\eps}{4}|E|
    +\frac{\eps}{4}|E|\\
    &=\eps|E|,
\end{align*}
where we use $k\geq 8/\eps$ and $R=4/\eps$.

For the round complexity, by \eqref{eq:parameters},
\[
    k=\Theta(\eps^{-1}),
    \qquad
    R=\Theta(\eps^{-1}),
    \qquad
    L=\Theta(\eps^{-3}),
    \qquad
    \rho=\Theta(\eps^5).
\]

To evaluate the degree guard \eqref{eq:freeze-condition}, each vertex
needs to know the original degrees of its neighbors.
Thus, at the beginning of the algorithm, every vertex sends its original
degree to all of its neighbors.
This takes one $\congest$ round using $O(\log n)$-bit messages.
Afterward, all proposal, acceptance, rejection, and matching-state
messages use only $O(1)$ bits per edge per communication round.

A call to $\AMM(H,\rho)$ takes
$\lceil\log_2(1/\rho)\rceil=O(\log(1/\eps))$ steps, each of which can be
implemented in $O(1)$ $\congest$ rounds.
All other operations in a call to $\GPR$ also require only $O(1)$
$\congest$ rounds.
Thus, each call to $\GPR$ takes $O(\log(1/\eps))$ $\congest$ rounds.

There are $L$ calls to $\GQM$, each consisting of $k$ calls to
$\GPR$, so the total round complexity is
\[
    O\left(Lk\log(1/\eps)\right)
    =
    O\left(\frac{\log(1/\eps)}{\eps^4}\right).
\]

Finally, since $L$ is a power of two, a uniformly random index
$J\in\{1,\ldots,L\}$ can be selected using exactly
\(
    \log_2 L=O(\log(1/\eps))
\) shared random bits.
\end{proof}

\section{Removing Shared Randomness}
\label{sec:removing-shared-randomness}

In this section, we remove the assumption of pre-shared randomness using
a low-diameter decomposition.
We partition the graph into low-diameter clusters, generate the required
shared randomness independently within each cluster, and run the
algorithm of \Cref{sec:constant-round-congest} in parallel on the
resulting induced subgraphs.

We use the low-diameter decomposition algorithm of Miller, Peng, and
Xu~\cite{miller2013parallel}.
Although originally presented in the parallel setting, its implementation
in the $\congest$ model is standard; see, for example,
Forster, Gr\"osbacher, and de Vos~\cite{forster2022improved}.
In the subsequent discussion, the \emph{strong diameter} of a vertex
subset $C$ is the diameter of the subgraph induced by $C$.

\begin{lemma}[Low-diameter decomposition~\cite{miller2013parallel,forster2022improved}]
\label{lem:congest-ldd}
For every unweighted graph $G=(V,E)$ and every $0<\beta\leq 1$, there is
a randomized $\congest$ algorithm that, in
$O\left(\frac{\log n}{\beta}\right)$ rounds, partitions $V$ into
clusters of strong diameter
$O\left(\frac{\log n}{\beta}\right)$.
Moreover, every edge has its endpoints in different clusters with
probability at most $\beta$.
\end{lemma}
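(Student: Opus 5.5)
The statement is the Miller--Peng--Xu low-diameter decomposition, so my plan is to recall their exponential-shift clustering and verify its three properties in $\congest$. Every vertex $v$ independently samples $\delta_v\sim\mathrm{Exp}(\beta)$. Each vertex $u$ joins the cluster of the center
\[
c(u)=\arg\max_v\,(\delta_v-\dist(u,v)),
\]
with ties broken by identifier. Equivalently, run a synchronized multi-source BFS in which $v$ starts at time $\delta_{\max}-\delta_v$, and each vertex joins the first wave to reach it.

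First, I bound $\delta_{\max}$. Since $\Pr[\delta_v> t]=e^{-\beta t}$, a union bound gives $\delta_{\max}\le c\log n/\beta$ with probability $1-n^{-(c-1)}$. On the failure event, every vertex simply forms a singleton cluster. This modifies the cut probability of any edge by at most $1/\poly(n)$, which is absorbed by running with $\beta/2$. Given the bound on $\delta_{\max}$, the BFS finishes in $O(\log n/\beta)$ rounds. Each message carries only a center identifier and a shifted distance, and this shifted distance can be rounded, using the standard trick of working with integer start times and fractional tie-breaking, so $O(\log n)$ bits suffice.

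Second, I show strong diameter. Every $u$ is at distance at most $\delta_{c(u)}\le\delta_{\max}$ from its center. Moreover, if $u$ lies in the cluster of $c$, then every vertex $x$ on a shortest $c$--$u$ path also picks $c$. This holds because for any other $v$,
\[
\delta_c-\dist(x,c)=\delta_c-\dist(u,c)+\dist(u,x)\ \ge\ \delta_v-\dist(u,v)+\dist(u,x)\ \ge\ \delta_v-\dist(x,v),
\]
and the consistent tie-breaking carries over. Hence each cluster contains a BFS tree of depth $O(\log n/\beta)$ rooted at its center, and its strong diameter is $O(\log n/\beta)$.

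Third, and this is the main step, I bound the cut probability. Fix an edge $\{u,w\}$, and consider the shifted values $\delta_v-\dist(u,v)$ as seen from $u$. The values seen from $w$ differ by at most $1$ for each $v$. If the largest shifted value at $u$ exceeds the second largest by more than $2$, then $u$ and $w$ pick the same center. So it suffices to show that the gap between the top two values of $\{\delta_v-\dist(u,v)\}_v$ is at most $2$ with probability $O(\beta)$. Here I use memorylessness. Condition on the identity of the second-largest value and on its value $s$. The top variable is then distributed as $s$ plus a fresh $\mathrm{Exp}(\beta)$, and the probability that this excess is at most $2$ equals $1-e^{-2\beta}\le 2\beta$. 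This conditioning needs care, since the shifts are deterministic offsets of independent exponentials, but the standard argument handles it by conditioning on all variables except the maximizing one. Rescaling $\beta$ by a constant then yields cut probability at most $\beta$, completing the proof. Since this is a cited result, in the paper I would present this argument as a sketch and defer the details to \cite{miller2013parallel,forster2022improved}.
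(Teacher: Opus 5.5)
Your proposal is correct and is the Miller--Peng--Xu exponential-shift argument that the paper relies on; the paper does not prove this lemma but cites it, and your memorylessness bound on the gap between the two largest shifted values is the standard core of that proof. One detail needs repair: the failure event $\max_v\delta_v>c\log n/\beta$ is global and no vertex can detect it locally, so instead of switching every vertex to a singleton, clamp each shift to $\min\{\delta_v,c\log n/\beta\}$. This keeps the round and strong-diameter bounds deterministic and agrees with the unclamped clustering on the good event, so your coupling bound still holds; and if $\beta<1/n$, simply output the connected components, which cut no edge, have strong diameter less than $n\le\log n/\beta$, and are computable in $O(n)=O(\log n/\beta)$ rounds.
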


We now prove \Cref{cor:decomposition}.

\decomposition*

\begin{proof}
Set $\beta=\eta=\eps/2$ and apply \Cref{lem:congest-ldd} with parameter
$\beta$.
Let $\mathcal{C}=\{C_1,\ldots,C_q\}$ be the resulting partition.
For each $i$, let $G_i=G[C_i]$ be the subgraph induced by $C_i$, and let
$E_i=E(G_i)$ be its edge set.
Let $E_{\mathrm{cut}}$ be the set of edges joining different clusters.
By \Cref{lem:congest-ldd} and linearity of expectation,
\[
    \E[|E_{\mathrm{cut}}|]\leq \beta|E|.
\]

We next generate the shared randomness required within each cluster.
Since every cluster has strong diameter
$O\left(\frac{\log n}{\beta}\right)$, its vertices can elect a leader
and construct a rooted spanning tree of this depth within
$O\left(\frac{\log n}{\beta}\right)$ rounds~\cite{peleg2000distributed}.
The leader of each cluster privately chooses the uniformly random output
index required by $\DGASM$ with accuracy parameter $\eta$ and broadcasts
it within the cluster.
As the index is represented using $b=O(\log(1/\eta))$ bits,  by
pipelining $O(\log n)$ bits per edge per round, the low-diameter
decomposition, leader election, and broadcast of the cluster-wise shared
randomness together take
\[
    O\left(
        \frac{\log n}{\beta}
        +
        \frac{b}{\log n}
    \right)
    =
    O\left(
        \frac{\log n}{\eps}
        +
        \frac{\log(1/\eps)}{\log n}
    \right)
\]
rounds.
All clusters perform these operations in parallel.

We then run $\DGASM$ independently on each $G_i$ with accuracy parameter
$\eta$, using degrees, preference ranks, and quantiles with respect to
$G_i$.
Inter-cluster edges are ignored, and all clusters execute the algorithm
in parallel.
If $M_i$ is the matching returned on $G_i$, we output
$M=\bigcup_{i=1}^q M_i$.

Every blocking pair of $M$ is either an inter-cluster edge or is contained
in some $G_i$.
Hence
\[
    \bp_G(M)
    \leq |E_{\mathrm{cut}}|
    +\sum_{i=1}^q \bp_{G_i}(M_i).
\]
Conditioned on the decomposition $\mathcal C$, \Cref{thm:main} gives
$\E[\bp_{G_i}(M_i)\mid\mathcal C]\leq\eta|E_i|$ for every $i$.
Since $\sum_i|E_i|\leq|E|$, taking expectations yields
\[
    \E[\bp_G(M)]
    \leq \beta|E|+\eta|E|
    =\eps|E|.
\]

The executions of $\DGASM$ run in parallel and take
$O\left(\frac{\log(1/\eps)}{\eps^4}\right)$ rounds.
Therefore, the total round complexity is
\[
    O\left(
        \frac{\log(1/\eps)}{\eps^4}
        +
        \frac{\log n}{\eps}
    \right).
    \qedhere
\]
\end{proof}

\section{Massively Parallel Computation}
\label{sec:mpc}

In this section, we show that the algorithm of
\Cref{sec:constant-round-congest} can be implemented in the
fully-scalable $\mpc$ model without increasing its asymptotic round
complexity.
The combinatorial algorithm is unchanged.
The main issue is that the edges incident to a high-degree vertex may be
distributed across many machines, so operations such as finding the best
proposal quantile or updating all incident edges cannot be performed
locally on a single machine.
We show that all such operations can be implemented using standard $\mpc$
sorting and prefix-sum primitives.

\paragraph{Basic primitives.}
Recall that each machine has $S= \Theta(n^\delta)$ words of local memory, for an
arbitrarily small fixed constant $\delta>0$.
We use $O((n+|E|)/S)$ machines, and hence $O(n+|E|)$ total memory.

The machines are indexed by their machine IDs $1,2,3,\ldots$, and the records stored on each machine are indexed by their local positions. This induces a global ordering of all records, ordered first by machine ID and then by local position.
A \emph{sorting} operation takes a designated collection of records with
$O(1)$-word keys and redistributes them so that their keys appear in
nondecreasing order in the global ordering.
A \emph{prefix-sum} operation takes values $x_1,\ldots,x_N$ stored in
records ordered according to the global ordering and computes \(\sum_{j=1}^i x_j\) for every
$i$.
Records not participating in an operation can simply be ignored.

Sorting and prefix sums on $N$ records of $O(1)$ words can be implemented in
$O(\log_S N)$ rounds~\cite{goodrich2011sorting}.
In our setting, $N=O(n+|E|)$ and $\log_S N=O(1)$.

\begin{lemma}[Basic primitives]
\label{lem:mpc-primitives}
Using $O(n+|E|)$ total memory, each of the following operations can be
implemented in $O(1)$ $\mpc$ rounds:
\begin{itemize}
    \item Sorting $O(n+|E|)$ records, each consisting of $O(1)$ words.
    \item Computing prefix sums of $O(n+|E|)$ $O(1)$-word values.
    \item Broadcasting an $O(1)$-word value from one machine to all
    machines.
\end{itemize}
\end{lemma}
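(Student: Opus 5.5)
The plan is to reduce all three operations to the cited sorting result of Goodrich, Sitchinava, and Zhang~\cite{goodrich2011sorting}, plus a standard aggregation tree for prefix sums and broadcasting. Throughout, $N=O(n+|E|)$ records are spread over $O(N/S)$ machines with $S=\Theta(n^\delta)$. Since $|E|\leq n^2$, we have $N=O(n^2)$, and hence
\[
    \log_S N=O\left(\frac{\log n}{\delta\log n}\right)=O(1/\delta)=O(1)
\]
for the fixed constant $\delta$. This single observation is what turns every $O(\log_S N)$-round primitive into an $O(1)$-round one.

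For sorting, I will invoke~\cite{goodrich2011sorting} directly. It sorts $N$ records of $O(1)$ words in $O(\log_S N)$ rounds, uses $O(N/S)$ machines with $S$ words each, and so has total memory $O(N)=O(n+|E|)$. The only check needed is that each record has $O(1)$ words, so every machine holds $O(S)$ records.

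For broadcasting, I will build an implicit $S'$-ary tree over the machine IDs, with $S'=\Theta(S)$ small enough that a node can send one word to each of its children within its $O(S)$ communication budget. The source first forwards the value to machine $1$, the root. The value is then pushed down one level per round, so the number of rounds equals the tree depth, $\lceil\log_{S'}(N/S)\rceil=O(1)$. Memory per machine stays $O(S)$.

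For prefix sums, I will use the same tree with two sweeps. In the upward sweep, each machine computes the local sum of its records and sends it to its parent. Each internal node receives at most $S'$ words, computes prefix sums over its children in ID order, and passes its total upward. In the downward sweep, each node sends every child the sum of all records that precede that child's subtree in the global order. A leaf machine then adds this offset to its local prefix sums. Both sweeps take depth-many rounds, so $O(1)$ in total.

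The only delicate point is the bookkeeping constraint that no machine sends or receives more than $O(S)$ words per round. I handle this by choosing the fan-out $S'=\lfloor S/c\rfloor$ for a suitable constant $c$, so that each node exchanges only $O(1)$ words with each of its at most $S'$ children plus its parent. Each node can compute its parent and children arithmetically from its own ID, so no extra memory is needed for the tree structure.
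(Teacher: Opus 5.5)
Your proof is correct, but it is more self-contained than the paper's.

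\begin{itemize}
\item \textbf{Paper's route.} The paper cites Goodrich, Sitchinava, and Zhang for both sorting and prefix sums. It then obtains broadcasting from prefix sums: the value is sent to machine $1$, and every machine creates one record, with machine $1$ storing the value and all others storing $0$. A single prefix-sum computation then copies the value to every machine.
\item \textbf{Your route.} You cite only the sorting result and implement the other two primitives yourself on an explicit tree of fan-out $\Theta(S)$. The downward sweep alone gives broadcasting, and an upward-plus-downward sweep gives prefix sums.
\end{itemize}

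The paper's version is shorter and needs no tree at all, because broadcasting becomes a one-line reduction. Yours shows explicitly why each round stays within the $O(S)$ communication budget and why the depth is $O(\log_S N)=O(1/\delta)$. In effect, it reconstructs the mechanism behind the cited prefix-sum result.

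If you write it out, state the tree layout precisely. Your phrase ``the sum of all records that precede that child's subtree in the global order'' only makes sense if every subtree spans a contiguous range of machine IDs.
\begin{itemize}
\item Take the data-holding machines as the leaves, in ID order, and host each internal node on, say, the leftmost machine of its range.
\item Choose the constant $c$ large enough that a machine hosting one node per level can still store all the child sums it needs for the downward sweep within $O(S)$ words.
\item Do not use a heap-indexed tree, where machine $i$ has children $S'(i-1)+2,\dots,S'i+1$. Its subtrees are not contiguous in ID order, so it would produce sums in the wrong order.
\end{itemize}
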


\begin{proof}
The first two operations follow from
\citet{goodrich2011sorting}.
For the third, first send the value to the machine with the smallest
machine ID.
Each machine then creates one designated record: the first machine
stores the value in its record, while every other machine stores zero.
A prefix-sum computation on these records copies the value to every
machine.
\end{proof}

We will also use these operations independently on many disjoint blocks
of records in parallel.
In particular, after sorting records by vertex identifier, all records
associated with the same vertex form one consecutive block.
Prefix sums and broadcasts can then be carried out separately within
each block, simultaneously for all vertices.
This allows us, for example, to count the records associated with each
vertex, assign them consecutive indices, or propagate an $O(1)$-word
value associated with the vertex to every record in its block.

\paragraph{Implementing the algorithm.}
We maintain one $O(1)$-word record for every vertex and every original
edge.
A vertex record stores its original degree, current partner, active
status and active quantile when applicable.
An edge record stores its two preference ranks, its two quantile indices,
and whether it is live.
We use only a constant number of additional temporary fields per record
throughout the execution.
We first explain how the original degrees and quantile indices stored in
these records are computed.

\paragraph{Preprocessing step 1: Degree and quantile computation.}
For every edge $e=\{u,v\}$, we create two temporary incidence records,
one associated with $u$ and one with $v$.
The incidence record associated with $v$ stores the edge identifier
$e$ and the preference rank of the other endpoint in $v$'s preference
list.
We sort the incidence records, together with the vertex records, by
vertex identifier.

For each vertex $v$, its incidence records now form one consecutive
block.
A prefix sum of ones over this block assigns the values
$1,\ldots,d(v)$ to its incidence records; in particular, the last
incidence record learns $d(v)$.
We then broadcast this value within the block and store it in the vertex
record of $v$.
Consequently, every incidence record associated with $v$ knows both
$d(v)$ and the corresponding preference rank, and can locally compute
\[
    s_v=\left\lceil\frac{d(v)}{k}\right\rceil
    \qquad\text{and}\qquad
    q_v(u)=\left\lceil\frac{P_v(u)}{s_v}\right\rceil.
\]
Recall that $s_v$ is the size parameter used to partition $v$'s
preference list into quantiles, and $q_v(u)$ is the index of the
quantile containing $u$.

Finally, we sort the incidence records by edge identifier and store the
two resulting quantile indices in the corresponding edge record.
Thus, all original degrees and quantile indices are computed in
$O(1)$ $\mpc$ rounds using $O(n+|E|)$ total memory.

We use the same mechanism throughout the algorithm.
Whenever an $O(1)$-word state associated with a vertex is needed by all
of its incident edges, we sort the corresponding records by vertex
identifier and perform a broadcast within each vertex block.
This takes $O(1)$ $\mpc$ rounds even when a block spans many machines,
so the incident edges of a high-degree vertex never need to fit on a
single machine.

\paragraph{Preprocessing step 2: Shared random output index.}
The shared output index $J$ can also be generated in $O(1)$ $\mpc$
rounds.
One designated machine samples $J$ uniformly from $\{1,\ldots,L\}$ and
broadcasts it to all machines.
Since $1/\eps\leq n^2$, the value $J$ can be represented using
$O(\log n)$ bits and hence occupies $O(1)$ words.

We next describe the $\mpc$ implementation of $\GPR$.

\begin{lemma}[Implementation of a guarded proposal round]
\label{lem:mpc-gpr}
One call to $\GPR$ can be implemented in
$O(\log(1/\rho))$ $\mpc$ rounds using $O(n+|E|)$ total memory.
\end{lemma}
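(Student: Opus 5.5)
The plan is to simulate each step of $\GPR$ by a constant number of sorting, prefix-sum, and blockwise broadcast operations from \Cref{lem:mpc-primitives}. The one exception is the $\lceil\log_2(1/\rho)\rceil$ steps of $\AMM$, each of which we also implement in $O(1)$ rounds. Throughout, all state lives in the $O(1)$-word vertex and edge records, so the total memory stays $O(n+|E|)$.

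\textbf{Proposal and acceptance.}
\begin{enumerate}
\item Sort the edge records, joined with the vertex records, by the man's identifier. Broadcasting within each block tells every edge whether its male endpoint is active and unmatched, and what his active quantile is. An edge is marked as a proposal if it is live and lies in that quantile.
\item Re-sort by the woman's identifier, ordering each block by $q_w(\cdot)$ among proposal edges. A prefix-minimum within each block (equivalently, sorting and taking the first proposal record) finds the best quantile containing a proposal. Broadcasting this value lets each proposal edge decide whether it is accepted.
\end{enumerate}
The accepted edges form $H$, which is again represented as a marked subset of the edge records.

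\textbf{One step of $\AMM$.}
\begin{enumerate}
\item Each free man must choose a uniformly random free neighbor. Sort $H$-edges by man and mark the edges whose woman endpoint is free; freeness is obtained by a broadcast within the woman's block.
\item A prefix sum over the marked edges numbers each man's free neighbors $1,\ldots,c$. The last record learns $c$, and a broadcast delivers $c$ to the man's record.
\item The man samples $r\in\{1,\ldots,c\}$ and broadcasts it; the edge numbered $r$ becomes his proposal.
\item Sort by woman. Each free woman picks the first proposal in her block.
\item Update the freeness of the endpoints via one more pair of sorts.
\end{enumerate}
This gives $O(1)$ rounds per step and $O(\log(1/\rho))$ rounds in total. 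Afterwards, the residual edges $Z_H$ are identified by looking up both endpoints' freeness through two sort-and-broadcast passes, and are marked deleted.

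\textbf{Matching updates, guard, and rejections.} For each woman with a new partner $m$, her record stores $m$, $d(m)$ (obtained via the edge record of $(m,w)$), and $q_w(m)$. Her old partner $m_{\mathrm{old}}$ is recorded so that his vertex record can be reset to $\bot$; this is a single sort of $O(n)$ update records by man identifier. The woman locally evaluates $d(m)>R\,d(w)$. Broadcasting the outcome and $q_w(m)$ within her edge block lets each live edge $(m',w)$ decide locally whether it is deleted: all of them except $(m,w)$ if the pair is frozen, and otherwise those with $q_w(m')\ge q_w(m)$. Finally, the men incident to $N$ clear their active status through a sort by man identifier.

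\textbf{Main obstacle.} The step needing the most care is the uniform random choice in $\AMM$ when a man's incident edges span many machines. The fix is to rank his eligible edges by a prefix sum and select one by a broadcast index, as above, rather than gathering his neighborhood on a single machine. Every other step is a constant number of the primitives, and since each primitive uses $O(1)$ rounds by \Cref{lem:mpc-primitives}, the claimed bound follows.
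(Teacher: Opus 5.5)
Your proposal is correct and follows essentially the same route as the paper. Both implement proposal and acceptance by sorting on the woman and her quantile index, implement each almost-maximal-matching step in $O(1)$ rounds by indexing a man's free neighbors with prefix sums and broadcasting a random index within his block, and carry out the guard and rejection updates by blockwise broadcasts. Your version simply fills in more of the sort-and-broadcast details than the paper's sketch.
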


\begin{proof}
The proposal and acceptance steps of $\GPR$ can be implemented in
$O(1)$ $\mpc$ rounds using the standard primitives described above.
In particular, the proposals are sorted by their female endpoints and
quantile indices, allowing every woman to identify her best quantile
containing a proposal and hence the accepted proposal graph $H$.

It remains to implement $\AMM(H,\rho)$.
In each step, for every free man we index his free neighbors, sample a
uniformly random index, and send a proposal to the corresponding woman.
The proposals are then grouped by their female endpoints, and every
woman receiving at least one proposal selects one of them according to
a fixed tie-breaking rule.
All of these operations can be performed in $O(1)$ $\mpc$ rounds.
Thus, since $\AMM$ performs
$\lceil\log_2(1/\rho)\rceil$ steps, it takes
$O(\log(1/\rho))$ $\mpc$ rounds.

After $\AMM$ terminates, all remaining operations of $\GPR$ consist of
identifying residual edges and performing deterministic updates to
vertex and edge states.
These can be implemented in $O(1)$ $\mpc$ rounds using the standard
primitives above.
Thus, the entire call takes $O(\log(1/\rho))$ rounds and faithfully
simulates $\GPR$.
Throughout the procedure, only a constant number of $O(1)$-word records and
temporary fields are maintained per original vertex and edge, so the
total memory is $O(n+|E|)$.
\end{proof}

We have now described all the ingredients needed to implement $\DGASM$
in the fully-scalable $\mpc$ model.
We can therefore prove \Cref{cor:mpc}.

\massivelyparallel*

\begin{proof}
At the beginning of each call to $\GQM$, we sort the live edges incident
to each unmatched man by their quantile indices.
This identifies his first quantile containing a live edge in $O(1)$
$\mpc$ rounds, which is then stored as his active quantile.
The procedure subsequently executes $k$ calls to $\GPR$.
The algorithm executes $\GQM$ for $J$ iterations and then stops,
returning the current matching.

By the implementation described above, the $\mpc$ algorithm simulates
the first $J$ iterations of $\DGASM$.
Since $J$ is chosen independently of the proposal process, the returned
matching has the same distribution as $M_J$ in the execution analyzed
in \Cref{thm:main}.
Hence, the guarantee
$\E[\bp(M)]\leq\eps|E|$ follows directly from the proof of
\Cref{thm:main}.

For the round complexity, recall from \eqref{eq:parameters} that
$k=\Theta(\eps^{-1})$, $L=\Theta(\eps^{-3})$, and
$\rho=\Theta(\eps^5)$.
Since $J\leq L$, there are at most $Lk$ calls to $\GPR$, each taking
$O(\log(1/\rho))=O(\log(1/\eps))$ $\mpc$ rounds by
\Cref{lem:mpc-gpr}.
All remaining operations take only $O(1)$ rounds per call to $\GQM$,
in addition to the $O(1)$ preprocessing.
Hence, the total round complexity is
\[
    O\left(
        Lk\log(1/\rho)
    \right)
    =
    O\left(
        \frac{\log(1/\eps)}{\eps^4}
    \right).
\]

Throughout the execution, we maintain only a constant number of
$O(1)$-word records per original vertex and edge.
Thus, the total memory is $O(n+|E|)$ words, with at most $n^\delta$
words stored on each machine.
\end{proof}

\section{Conclusions and Open Questions}
\label{sec:conclusions}

We showed that, for every constant $\epsilon>0$, a
$(1-\epsilon)$-stable matching can be computed in $O(1)$ rounds with a
small amount of shared randomness (\Cref{thm:main}), and in
$O(\log n)$ rounds without pre-shared randomness
(\Cref{cor:decomposition}).
This improves the previous randomized upper bound of
$O(\log^2 n)$ due to Ostrovsky and Rosenbaum~\cite{ostrovsky2015fast}.
Our results leave several natural directions for further investigation.

\paragraph{Is shared randomness necessary?}
The most immediate question is whether the gap between the models with
and without shared randomness can be closed.
Shared randomness is known to increase the power of distributed graph
algorithms: \citet{balliu2024shared} established an exponential
separation between the two models for the class of locally checkable
labeling problems.
To the best of our knowledge, however, no such separation is known for
a natural distributed graph problem.
Can constant round complexity be achieved without shared
randomness for almost stable matching?

\paragraph{What is the optimal deterministic complexity?}
The optimal deterministic round complexity of almost stable matching
remains open.
Ostrovsky and Rosenbaum~\cite{ostrovsky2015fast} give the bound
\(
    O\left(
        \frac{\dtmax\log n}{\epsilon^3}
    \right)
\),
where $\dtmax$ denotes the deterministic round complexity of maximal
matching.
It would be interesting either to improve the dependence on $n$ or to
show that some nonconstant dependence is unavoidable.

Proving such a lower bound appears challenging. For the related problem of approximating maximum independent set,
Lenzen and Wattenhofer~\cite{lenzen2008leveraging} proved an
$\Omega(\log^* n)$ deterministic lower bound even on cycles.
The same lower bound applies to approximating maximum matching, since the two problems
are locally equivalent on cycles. 
Their construction, however, cannot be adapted directly to almost stable
matching: for bounded-degree graphs and constant $\epsilon$,
\citet{floreen2010almost} give an $O(1)$-round algorithm for
$(1-\epsilon)$-stable matching.
Thus, any $\omega(1)$ deterministic lower bound for almost stable
matching must exploit instances with unbounded degree.

\paragraph{What is the optimal dependence on $\epsilon$?}
Finally, there remains a polynomial gap between our
\(
    O\left(
        \frac{\log(1/\epsilon)}{\epsilon^4}
    \right)
\)
upper bound and the $\Omega(1/\epsilon)$ lower bound for general graphs.
Can the upper bound be improved to nearly linear in $1/\epsilon$, or is
a stronger lower bound possible?
For bounded-degree graphs, the dependence on $\epsilon$ is already tight
up to constant factors, by the
$O\left(1/\epsilon\right)$-round algorithm of
\citet{floreen2010almost}.

\section*{AI Disclosure}
This project was originally developed during 2024--2025, but the authors
abandoned the manuscript after discovering a critical flaw in the proof
in January 2025 that they were unable to repair at the time.
The project was revived in 2026 with the assistance of OpenAI's ChatGPT
(GPT-5.6 Sol).
ChatGPT proposed the degree-guarded freezing rule and used it to repair
the flawed proof.
ChatGPT also assisted extensively in writing the manuscript, including
drafting and revising substantial portions of the text throughout the
paper.
All AI-generated content was checked and revised by the authors before
being incorporated into the paper.
The authors take full responsibility for the correctness of the paper.

\printbibliography
\appendix

\section{Lower Bound}
\label{sec:LB}

In this section, we recall the $\Omega(1/\epsilon)$ lower bound for
almost stable matching.
It follows directly from the $\Omega(n)$-round lower bound of
Flor\'{e}en, Kaski, Polishchuk, and Suomela~\cite{floreen2010almost}
for exact stable matching.
Indeed, their hard instance can be chosen to have
$n=\Theta(1/\epsilon)$ vertices and fewer than $1/\epsilon$ edges.
Thus, every $(1-\epsilon)$-stable matching on this instance has no
blocking pair and is therefore exactly stable.
For completeness, we give the argument below.

The lower bound is based on a reduction to the \emph{mailing
problem}~\cite{peleg2000near}.
In this problem, we are given a graph $G=(V,E)$ with two designated
vertices: a \emph{sender} $s$ and a \emph{receiver} $r$.
The sender is given a bit $b\in\{0,1\}$, and the goal is for the receiver
to output $b$.
The following standard observation also holds in the presence of shared
randomness.

\begin{lemma}
\label{lem:lower-bound}
In the $\local$ model with shared randomness, any algorithm that solves
an instance $(G,s,r)$ of the mailing problem with probability greater
than $1/2$ requires at least $d$ rounds, where
$d=\dist(s,r)$.
\end{lemma}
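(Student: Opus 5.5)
The plan is the standard indistinguishability argument, phrased so that it survives shared randomness. Fix any algorithm $\mathcal{A}$ that runs for $T<d$ rounds, where $d=\dist(s,r)$. In the $\local$ model, the output of $r$ after $T$ rounds is a deterministic function of three things: the inputs in the radius-$T$ ball around $r$ (its topology, identifiers, and local inputs), the private random bits of the vertices in that ball, and the shared random string. We will show that this output has the same distribution whether the sender's bit is $b=0$ or $b=1$. The first step is the usual induction on rounds: the state of a vertex $v$ after $i$ rounds depends only on the initial data of the vertices within distance $i$ of $v$ and on the shared string.

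Since $T<d$, the sender $s$ lies outside the radius-$T$ ball around $r$. The bit $b$ is part of $s$'s input only, so the initial data in that ball, together with all the random bits, is identical in the two executions, the one with $b=0$ and the one with $b=1$. Couple the two executions by using the same shared string and the same private bits. Under this coupling, $r$ outputs the same value $o$ in both executions, pointwise for every realization of the randomness.

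Now average over the randomness. The success probability on input $b$ is $\Pr[o=b]$, so
\[
  \Pr[\text{success}\mid b=0]+\Pr[\text{success}\mid b=1]=\Pr[o=0]+\Pr[o=1]=1 .
\]
Hence one of the two inputs is solved with probability at most $1/2$, which contradicts the assumption that $\mathcal{A}$ succeeds with probability greater than $1/2$ on every input. Therefore $T\geq d$.

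The only point needing care is the role of shared randomness. I would handle it by conditioning on the shared string $\sigma$, which is independent of $b$. For each fixed $\sigma$, the argument above is just the private-randomness argument, and averaging over $\sigma$ preserves the identity
\[
  \Pr[\text{success}\mid b=0]+\Pr[\text{success}\mid b=1]=1 .
\]
I do not expect any real obstacle here. The main thing to state explicitly is that the locality induction includes $\sigma$ as a global input that is known to every vertex and does not depend on $b$.
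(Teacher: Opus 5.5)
Your proof is correct and follows the same indistinguishability argument as the paper: since $T<d$, the output of $r$ has the same distribution for $b=0$ and $b=1$, so one of the two inputs is answered correctly with probability at most $1/2$. Your explicit coupling over the shared string and private bits, and the identity that the two success probabilities sum to (at most) $1$, make precise the step the paper states as ``no information about $b$ can reach $r$.''
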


\begin{proof}
Suppose that the algorithm terminates in fewer than $d$ rounds.
Then no information about the input bit $b$ can reach $r$, so the
distribution of the output of $r$ is independent of $b$.
Let $x\in\{0,1\}$ be such that the probability that $r$ outputs $x$ is
at most $1/2$.
If $b=x$, the algorithm therefore succeeds with probability at most
$1/2$.
Hence, any algorithm with success probability greater than $1/2$ must
take at least $d$ rounds.
\end{proof}

We now describe the hard stable marriage instance.
The construction is due to Flor\'{e}en, Kaski, Polishchuk, and
Suomela~\cite{floreen2010almost}.
It was also used as a building block in the
$\congest$ lower bound of Kipnis and
Patt-Shamir~\cite{kipnis2009note}, and our presentation follows their
approach.

It suffices to consider sufficiently small $\epsilon$.
Let $H$ be the path
\[
    \langle v_0,v_1,\ldots,v_{k-1},v_k\rangle,
    \qquad
    k=\left\lceil\frac{1}{\epsilon}\right\rceil-1.
\]
We view $H$ as a bipartite stable marriage instance using the natural
bipartition according to the parity of the vertex indices.
Set $s=v_1$ and $r=v_k$.
The sender $s$ is given a bit $b\in\{0,1\}$, which determines its
preference list as follows.
For each $i\in\{2,\ldots,k-1\}$, vertex $v_i$ prefers $v_{i-1}$ over
$v_{i+1}$.
At $v_1$, if $b=0$, then $v_1$ prefers $v_0$ over $v_2$, whereas if
$b=1$, then $v_1$ prefers $v_2$ over $v_0$.
The endpoints $v_0$ and $v_k$ each have only one neighbor.

\begin{lemma}
\label{lem:lower-bound-first}
Let $M$ be any stable matching of $H$.
For every $i\in\{0,\ldots,k-2\}$, $M$ contains exactly one of
$\{v_i,v_{i+1}\}$ and $\{v_{i+1},v_{i+2}\}$.
\end{lemma}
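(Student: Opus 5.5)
The plan is a direct case analysis on the local structure of the path; no earlier result is needed beyond the definition of a blocking pair. Write $e_j=\{v_j,v_{j+1}\}$. Fix $i\in\{0,\ldots,k-2\}$. The two edges $e_i$ and $e_{i+1}$ share the vertex $v_{i+1}$, so a matching contains at most one of them. It remains to show that a stable matching $M$ contains at least one of them.

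Suppose, for contradiction, that neither $e_i$ nor $e_{i+1}$ lies in $M$. Since $1\le i+1\le k-1$, the vertex $v_{i+1}$ is an interior vertex of the path whose only neighbors are $v_i$ and $v_{i+2}$. Hence $v_{i+1}$ is unmatched in $M$, and so it prefers any neighbor to its current status. I will derive the contradiction from the edge $e_{i+1}=\{v_{i+1},v_{i+2}\}$, looking at the endpoint $v_{i+2}$. Because $i\ge 0$, we have $i+2\ge 2$, so the sender $v_1$ with its $b$-dependent list never plays the role of $v_{i+2}$. This leaves two cases.

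\emph{Case $i+2=k$.} The vertex $v_k$ has the single neighbor $v_{k-1}=v_{i+1}$. Since $e_{i+1}\notin M$, the vertex $v_k$ is unmatched. Then $e_{i+1}$ is a non-matching edge with both endpoints unmatched, so it is a blocking pair.

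\emph{Case $2\le i+2\le k-1$.} By construction, $v_{i+2}$ prefers $v_{i+1}$ over $v_{i+3}$, which is its only other neighbor. Whether $v_{i+2}$ is matched to $v_{i+3}$ or is unmatched, it prefers $v_{i+1}$ to its current status. Together with $v_{i+1}$ being unmatched, this makes $e_{i+1}$ a blocking pair.

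Both cases contradict the stability of $M$, which proves the lemma. The only delicate point is index bookkeeping: we must check that $v_{i+1}$ is never an endpoint of the path, and that $v_{i+2}$ is never the sender $v_1$ whose preferences depend on $b$. The range $0\le i\le k-2$ guarantees both. I do not expect any substantive obstacle beyond this.
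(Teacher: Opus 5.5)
Your proof is correct and follows the same route as the paper. The paper uses the matching property for ``at most one.'' For ``at least one,'' it notes that if neither edge is in $M$ then $v_{i+1}$ is unmatched and $\{v_{i+1},v_{i+2}\}$ blocks, because $v_{i+2}$ ranks $v_{i+1}$ first. Your split into the endpoint case $i+2=k$ and the interior case simply makes explicit the paper's phrase ``ranks $v_{i+1}$ first among its neighbors.''
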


\begin{proof}
Since $M$ is a matching, it contains at most one of the two edges.
Suppose that it contains neither.
Then $v_{i+1}$ is unmatched.
Moreover, $v_{i+2}$ ranks $v_{i+1}$ first among its neighbors, while
$v_{i+1}$ prefers being matched to $v_{i+2}$ over being unmatched.
Thus, $\{v_{i+1},v_{i+2}\}$ is a blocking pair, contradicting the
stability of $M$.
\end{proof}

\begin{lemma}
\label{lem:unique-stable-matching}
The graph $H$ admits a unique stable matching
\[
M=
\begin{cases}
    \{\{v_0,v_1\},\{v_2,v_3\},\ldots\},
        & \text{if $b=0$,}\\
    \{\{v_1,v_2\},\{v_3,v_4\},\ldots\},
        & \text{if $b=1$.}
\end{cases}
\]
\end{lemma}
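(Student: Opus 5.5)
We will prove existence-and-form and uniqueness simultaneously by showing that \Cref{lem:lower-bound-first}, together with the behavior at the sender $v_1$, forces every stable matching to be the displayed one. We also check that this matching is indeed stable, which gives existence.

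\textbf{Pinning down the first edge.} We determine which of $\{v_0,v_1\}$ and $\{v_1,v_2\}$ lies in $M$, using the preferences at $v_0,v_1,v_2$.
\begin{itemize}
\item If $b=0$, the pair $\{v_0,v_1\}$ consists of mutual first choices: $v_0$ has only $v_1$, and $v_1$ ranks $v_0$ first. If $\{v_0,v_1\}\notin M$, both endpoints strictly prefer each other to their $M$-partners, so the edge blocks. Hence $\{v_0,v_1\}\in M$.
\item If $b=1$, then $v_2$ ranks $v_1$ first (since $2\le k-1$ for small $\epsilon$), and $v_1$ ranks $v_2$ first. So $\{v_1,v_2\}$ is a pair of mutual first choices and must lie in $M$ by the same argument.
\end{itemize}
This is essentially the only point where $b$ enters.

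\textbf{Propagation and uniqueness.} Once the first edge is fixed, apply \Cref{lem:lower-bound-first} inductively for $i=0,1,\ldots,k-2$. At each step, exactly one of two consecutive edges lies in $M$, so the membership of $\{v_i,v_{i+1}\}$ determines that of $\{v_{i+1},v_{i+2}\}$. It follows that the edges of $M$ alternate along the path, starting with $\{v_0,v_1\}$ when $b=0$ and with $\{v_1,v_2\}$ when $b=1$. This is exactly the displayed matching, so any stable matching must equal it, which gives uniqueness.

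\textbf{Existence.} It remains to verify that the displayed matching has no blocking pair. Take any non-matching edge $\{v_i,v_{i+1}\}$ with $i\ge 1$.
\begin{itemize}
\item In the alternating structure, one of its endpoints is matched to its left neighbor. Every vertex $v_j$ with $2\le j\le k-1$ prefers its left neighbor, so such an endpoint does not want to switch.
\item The exceptional vertices need separate checks. The vertex $v_1$ is matched to its top choice in both cases. The vertex $v_0$, when it is unmatched (case $b=1$), is adjacent only to $v_1$, who is already matched to her top choice. The endpoint $v_k$, when it is unmatched, is adjacent only to $v_{k-1}$, who is matched to $v_{k-2}$, his preferred neighbor.
\end{itemize}
Hence no edge blocks. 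Alternatively, existence follows from Gale--Shapley, since a stable matching always exists, combined with the uniqueness argument above.

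The main obstacle is the bookkeeping at the boundary. We must handle $v_1$'s preference depending on $b$, confirm that the rule ``$v_i$ prefers $v_{i-1}$'' applies at $v_2$, and treat $v_0$ and $v_k$ via the convention that any neighbor beats being unmatched. The parity of $k$ affects only whether $v_k$ ends up matched, and not the argument itself.
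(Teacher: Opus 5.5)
Your proposal is correct and follows the paper's proof: both fix whether $\{v_0,v_1\}\in M$ using the mutual top-choice pair at the sender, then propagate along the path with \Cref{lem:lower-bound-first}. The paper phrases the $b=1$ case as ``otherwise $\{v_1,v_2\}$ blocks,'' which is your mutual-first-choice argument in another form. Your explicit check that the displayed matching has no blocking pair is a harmless addition, since the paper leaves existence implicit and relies on Gale--Shapley.
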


\begin{proof}
Let $M$ be any stable matching.
By \Cref{lem:lower-bound-first}, it suffices to determine whether
$\{v_0,v_1\}$ belongs to $M$.

Suppose first that $b=0$.
If $\{v_0,v_1\}\notin M$, then $v_0$ is unmatched, while $v_1$ prefers
$v_0$ over $v_2$ and over being unmatched.
Hence, $\{v_0,v_1\}$ is a blocking pair.
Therefore, $\{v_0,v_1\}\in M$.

Now suppose that $b=1$.
If $\{v_0,v_1\}\in M$, then $\{v_1,v_2\}\notin M$.
However, $v_1$ prefers $v_2$ over $v_0$, and $v_2$ ranks $v_1$ first.
Thus, $\{v_1,v_2\}$ is a blocking pair.
Therefore, $\{v_0,v_1\}\notin M$.

The remainder of the matching is then uniquely determined by
\Cref{lem:lower-bound-first}.
\end{proof}

We can now prove the lower bound.

\lowerbound*

\begin{proof}
The graph $H$ has
\[
    |E(H)|
    =
    k
    =
    \left\lceil\frac{1}{\epsilon}\right\rceil-1
    <
    \frac{1}{\epsilon}.
\]
Hence, any $(1-\epsilon)$-stable matching of $H$ has fewer than one
blocking pair and therefore has no blocking pair.
Thus, every $(1-\epsilon)$-stable matching of $H$ is an exact stable
matching.

By \Cref{lem:unique-stable-matching}, the receiver $r=v_k$ can determine
the input bit $b$ from whether its incident edge
$\{v_{k-1},v_k\}$ belongs to the matching; which of the two possibilities
corresponds to $b=0$ is determined by the parity of $k$ and is known to
$r$.
Therefore, any algorithm that computes a $(1-\epsilon)$-stable matching
of $H$ with probability greater than $1/2$ yields an algorithm for the
mailing problem on $(H,s,r)$ with the same success probability.

By \Cref{lem:lower-bound}, such an algorithm requires at least
\[
    \dist(s,r)
    =
    k-1
    =
    \left\lceil\frac{1}{\epsilon}\right\rceil-2
    =
    \Omega\left(\frac{1}{\epsilon}\right)
\]
rounds.
This proves the theorem.
\end{proof}

\end{document}